\newcommand{\mmS}{\mathcal{S}}
 \newcommand{\mmC}{\mathcal{C}}
\newcommand{\mmN}{\mathcal{N}}
\newcommand{\mmE}{\mathcal{E}}
\newcommand{\mmEu}{\mathcal{E}_{\mathcal{U}}}
\newcommand{\mmR}{\mathcal{R}}
\newcommand{\mmU}{\mathcal{U}}
\newcommand{\mmH}{\mathcal{H}}
\newcommand{\mmRu}{\mathcal{R}_{\mathcal{U}}}
\newcommand{\mmG}{\mathcal{G}}
\newcommand{\mmGu}{\mathcal{G}_{\mmU}}
\renewcommand{\k}{\kappa}
\newcommand{\V}{\mathcal{B}}
\newcommand{\R}{{\mathbb R}}
\newcommand{\Z}{{\mathbb Z}}
\newcommand{\Rnn}{\mathbb{R}_{\geq 0}}
\newcommand{\Rp}{\mathbb{R}_{> 0}}
\newcommand{\st}{\,\mid \,}
\DeclareMathOperator{\supp}{supp}
\newtheorem{lemma}{Lemma}
\newtheorem{theorem}{Theorem}
\theoremstyle{definition}
\newtheorem{definition}{Definition}
\newtheorem{example}{Example}
\begin{document}
 \title{Nonnegative linear elimination for chemical reaction networks}

\author{Meritxell S\'aez$^{1}$, Carsten Wiuf$^{1}$, Elisenda Feliu$^{1,2}$}
\date{\today}

\footnotetext[1]{Department of Mathematical Sciences, University of Copenhagen, Universitetsparken 5, 2100 Copenhagen, Denmark.}
\footnotetext[2]{Corresponding author: efeliu@math.ku.dk}

\maketitle

\begin{abstract}
We consider linear elimination of variables in steady state equations of a chemical reaction network. Particular subsets of variables corresponding to sets of so-called reactant-noninteracting species, are introduced. The steady state equations for the variables in such a set,  taken together with potential linear conservation laws in the variables, define a linear system of equations.
 We give conditions that guarantee that the solution to this system is nonnegative, provided it is unique.
 The results are framed in terms of spanning forests of a particular multidigraph derived from the reaction network and thereby conditions for uniqueness and nonnegativity of a solution are derived by means of the multidigraph. Though our motivation comes from applications in systems biology, the results have general applicability in applied sciences.
\end{abstract}

\section{Introduction}
Systems of (parameterized) Ordinary Differential Equations (ODEs) are commonly used to describe complex dynamical systems of interacting species in cellular and systems biology, epidemiology and ecology. As a first description of the dynamics of a system, it is of interest to have a characterization of the steady states, their number and properties, for varying parameter values and constants of the system. This is rarely straightforward to obtain, except in simple cases.

Still, in some cases, a full or partial parameterization of the  manifold of the nonnegative steady states can be obtained. This manifold is typically of positive dimension because of conserved linear quantities among the variables of the system (concentrations or abundances of species). Hence,  the nonnegative steady states might be considered as the points in the intersection of the manifold with the equations for the conserved quantities (defined by the initial value of the system).
A partial or full parameterization of the manifold might further reduce significantly the number of free variables and provide crucial information about its structure. 
  For example, in \cite{feliu:intermediates},  classes of systems sharing a common core are studied and a partial parameterization is obtained in terms of the nonnegative steady states of the core system. In other cases, characterization of multistationarity  \cite{Conradi:PLOSCOMP} and of stability properties of steady states \cite{Feinbergss,feinberg-def0,feinberg-horn-open,hornjackson} are obtained by means of a full parameterization. The present paper is concerned about the existence and computation of partial (full) parameterizations.

Most work in this area has been done in the particular case of systems of chemical reactions --so-called reaction networks-- with mass-action kinetics. In this context, the steady states are the nonnegative solutions to a system of polynomial equations with parameterized coefficients. Standard computational algebra tools, such as Gr\"obner bases, might be applied to determine the steady states, but it  is rarely straightforward to assess nonnegativity and  it does  not seem to be a viable approach in general. For special classes of reaction networks full parameterizations have been obtained. In \cite{hornjackson,Craciun-Sturmfels}, a  parameterization of the manifold of the positive steady states of a complex balanced reaction network is provided.  A generalization of this result can be found in \cite{PerezMillan} for systems with toric steady states and in \cite{muller} for complex balanced steady states with generalized mass-action kinetics.

Species that never appear together at the same side of a reaction are introduced  in  \cite{Fel_elim} as noninteracting species.  The variables corresponding to the noninteracting species can be expressed in terms of the remaining variables and   this  provides in general a partial parameterization of the steady state manifold.
Post-translational modification systems form an important class of models that falls into this framework  \cite{TG-rational,fwptm}.

The key idea in the elimination of the noninteracting species is that their associated steady state equations define a linear  system with a unique solution. The solution might  be expressed as a rational function with positive coefficients in the remaining variables and parameters of the system. For linearity of the system, it is only required that  the noninteracting species  do not interact in the  reactant (left-hand side) of any reaction in the network. Perhaps surprisingly, this is not sufficient to assert that the solution is also nonnegative. For that, further conditions are required.

Here we study the existence of partial parameterizations of the manifold of the nonnegative steady states in terms of reactant-noninteracting species, that is, species that never appear together in the reactant of a reaction, but potentially do it in the product (right-hand side) of the reaction.
Our results build on  previous work on nonnegative solutions to systems of linear equations   \cite{Saez:PosSol} and contain the case of noninteracting species as a special case. We show by example that the partial (full) parameterizations can be obtained in even large systems.  An appealing feature of the work is that it is essentially graphical in nature. This makes it very easy to apply in concrete examples and decide on nonnegativity of the solution. For moderately sized systems, the relevant graphs can be drawn and analyzed by hand.

The paper is organised in the following way. In section \ref{sec:reactnet},  basic concepts of reaction network theory are introduced, and in section \ref{sec:eliminationsystem},  reactant-noninteracting sets of species and the so-called elimination system are defined.  Section \ref{sec:muldi} contains the explicit expressions of the solution in terms of the labels of a multidigraph, and  the main statements about uniqueness and  non-negativity of solutions. Two additional graphs that can be used to find reactant-noninteracting subsets and to study the nonnegativity of the solution are also discussed. In section \ref{sec:examples}, we give examples based on models of real biological systems. Finally,  section \ref{sec:proofs} contains proofs of the main results.

\section{Reaction networks}\label{sec:reactnet}
Let $\Rnn$ and $\Rp$ denote  the sets of nonnegative and positive real numbers, respectively, and define  $\Z_{\geq 0}$ analogously. A vector $x\in\R^n$ is \emph{positive} (resp.~\emph{nonnegative}) if $x_i>0$ (resp.~$x_i\ge 0$) for all $i=1,\ldots,n$.
For $x,y\in \R^n$, $x\cdot y$ denotes the scalar product associated with the Euclidean norm, and  $\langle v_1,\dots,v_r\rangle$ denotes the vector subspace generated by  $v_1,\dots,v_r\in \R^n$. 
 The power set of a finite set $W$ is denoted by $\mathcal{P}(W)$.

A \emph{reaction network} on an ordered  finite set $\mmS=\{S_1,\dots,S_n\}$ of \emph{species} is a digraph $(\mmC, \mmR)$ such that $\mmC\subseteq \Z_{\geq 0}^{\mmS}$. The nodes   are  called \emph{complexes}, and the edges   \emph{reactions}.  Furthermore, the source (resp.~target) of a reaction $r\in\mmR$ is called the reactant (resp.~product) of the reaction and is denoted by $y_r$ (resp.~$y'_r$). Since $\Z_{\ge 0}^\mmS\subseteq \R^{\mmS} \cong \R^n$, the complexes might be considered vectors of $\R^n$.  

The \emph{stoichiometric coefficient}  of the species $S_i\in\mmS$ in $\eta\in\mmC$ is  $\eta_i$, $i=1,\ldots,n$.  A species $S_i$ is in $\eta\in\mmC$ if $\eta_i>0$, and $S_i$ is in a reaction if it is in the reactant or the product of the reaction. For convenience, if a reaction network is given by specifying its reactions, we implicitly take the set of complexes to consists of all reactants and products and the set of species to consist of the species in the reactions. 
A pair of species $S_i,S_j$ \emph{interact}  if $S_i$ and $S_j$ are in the same complex. 
If $i=j$, this is understood to mean that $S_i$ appears with stoichiometric coefficient at least $2$ in a complex, and we say that $S_i$ self-interacts.
 For  $\mmS'\subseteq \mmS$, we say that $S\in \mmS'$ \emph{ultimately produces} $S'\in\mmS'$ \emph{via} $\mmS'$ if there exist distinct $S_{i_1},\dots, S_{i_\ell}\subseteq \mmS'$ with $S_{i_1}=S$, $S_{i_\ell}=S'$ and for $j=1,\dots,\ell-1$, $S_{i_j}$ is in the reactant and $S_{i_{j+1}}$ is in the product of some reaction.

To model the dynamics of a reaction network $(\mmC,\mmR)$ on $\mmS$ we introduce an edge labeling of $\mmR$, called a \emph{kinetics}. The label of  $r\in\mmR$ is a function $\kappa_r\colon \Omega_0 \rightarrow \R_{\geq 0}$,  called the \emph{rate function} of $r$, where $\Omega_0\subseteq \R^n_{\geq 0}$ and $\kappa_r(\Omega_0\cap\R^n_{>0})\subseteq \R_{>0}$. For convenience, we order the set of reactions $\mmR=\{r_1,\ldots,r_p\}$ and  let $\k=(\k_{r_1},\ldots,\k_{r_p})$ denote a kinetics. Hence $\k$ is a function from $\Omega_0$ to $\Rnn^p$. 
We often write $\k_i$ instead of $\k_{r_i}$. 

Under \emph{mass-action kinetics}, the rate functions are 
\[
\k_r\colon \Rnn^n\to\Rnn,\quad \kappa_r(x)=k_rx^{y_r}=k_r\prod\limits_{i=1}^nx_i^{(y_r)_i},\qquad r\in\mmR,
\]
where  $k_r>0$  is  the \emph{reaction rate constant} of reaction $r$. We use $k_{r_i}=k_i$ as the edge labeling of $\mmR$ in this case. By convention, $0^0=1$.

Given a reaction network $(\mmC,\mmR)$ on $\mmS$ with kinetics $\k$,  the ODE system modeling the evolution of the species concentrations over time is given as
\begin{equation}\label{eq:ODE}
\dot{x}=\sum_{r\in\mmR}\k_r(x)(y'_r-y_r), \qquad x\in \Omega_0,
\end{equation}
where $\dot{x}=(\dot{x}_1,\dots,\dot{x}_n)$ is the derivative of $x=(x_1,\ldots,x_n)$ with respect to time, and $x_i$ is  the concentration of species $S_i\in\mmS$. Explicit reference to time is omitted. The \emph{steady states} of the ODE system \eqref{eq:ODE} are the solutions to the system of equations 
\begin{equation}\label{eq:sseq}
\sum_{r\in\mmR}\k_r(x)(y'_r-y_r)=0, \qquad x\in\Omega_0, 
\end{equation}
referred to as the \emph{steady state equations}.

The \emph{stoichiometric subspace} of a reaction network $(\mmC,\mmR)$ on $\mmS$ is the  vector subspace of $\mathbb{R}^n$  given by
\[
S=\big\langle y'_r-y_r\st r\in \mathcal{R} \big\rangle\subseteq \R^n.
\]
It follows from \eqref{eq:ODE} that $\omega\cdot \dot{x}=0$ for any $\omega\in S^{\bot}$. Thus the quantity  $T=\omega\cdot x$, named the \emph{conservation law} of $\omega\in S^{\bot}$ with total amount $T$, is conserved over time.  Further, for every $x_0\in \R^n_{\geq 0}$ the trajectory belongs to the invariant linear variety $x_0+S$, defined by the equations $\omega\cdot x = \omega \cdot x_0$ for all $\omega\in S^\perp$, and it remains nonnegative. The polyhedra $(x_0+S)\cap \R^n_{\geq 0}$ are called \emph{stoichiometric compatibility classes}. Because of this invariance, it has been of interest to study the steady states within each stoichiometric compatibility class.

\begin{example}
Consider the following  reaction network with kinetics $\k$:
\[
S_1+2S_2  \ce{<=>[\k_1][\k_2]}S_3 \qquad  0\ce{->[\k_3]}S_2.
\]
Here, $\mmS=\{S_1,S_2,S_3\}$ and  $\mmC=\{S_1+2S_2,\, S_3,\, 0,\, S_2\}$. The  ODE system  is
\begin{align*}
\dot{x}_1=&-\k_1(x)+\k_2(x), &
\dot{x}_2=&-2\,\k_1(x)+2\,\k_2(x)+\k_3(x),&
\dot{x}_3=& \k_1(x)-\k_2(x).&
\end{align*}
Since $\dim(S^{\bot})=1$, there is one independent conservation law, for example $T_1=x_1+x_3$.
\end{example}

\section{Reactant-noninteracting sets and linear kinetics}\label{sec:eliminationsystem}
Let a  reaction network $(\mmC,\mmR)$ on $\mmS$ be given, and let $\mmU\subseteq\mmS$. 
For convenience, we assume $\mmU=\{U_1,\dots ,U_m\}$, $\mmS\setminus \mmU=\{X_1,\dots,X_{m'}\}$  with $m'=n-m$, and order $\mmS$ such that 
\[
\mmS=\{ U_1,\dots,U_m,X_1,\dots,X_{m'}\}.
\] 
Similarly, the vector of concentrations is given as $(u,x)$ with $u=(u_1,\ldots,u_m)$ and $x=(x_1,\ldots,x_{m'})$.
Let $\rho\colon\R^n\rightarrow \R^m$ be the projection onto the first $m$ components, mapping $(u,x)$ to $u$.
Define the \emph{support} of  a vector $\omega\in \R^n$ as $\supp(\omega)=\{S_i\st \omega_{i}\neq 0\}\subseteq \mmS$, and let
\begin{equation*} 
S_\mmU^\perp=\{\omega\in S^\perp\st \supp(\omega)\subseteq \mmU  \} \subseteq\   S^\perp.
\end{equation*}
(Note that $S_\mmU^\perp$ is not necessarily the orthogonal complement of the vector subspace generated by the vectors in $S$ with support in $\mmU$.) For  $(u_0,x_0)\in \R^n_{\geq 0}$,  the linear variety $(u_0,x_0)+S$ is a subvariety of 
  $(u_0,x_0)+(S_\mmU^\perp)^\perp$. Given any basis $\omega^1,\dots,\omega^d$ of $S_\mmU^\perp$,  $(u_0,x_0)+(S_\mmU^\perp)^\perp$ is defined by the equations
\begin{equation*}  
\rho(\omega^i) \cdot u = \rho(\omega^i) \cdot u_0, \quad i=1,\dots,d, \qquad (u,x)\in \R^n,
\end{equation*}
hence $(u_0,x_0)+(S_\mmU^\perp)^\perp$ is independent of $x_0$.

The focus of this work is to study, for a given $x\in \R^{m'}_{\geq 0}$ and $u_0\in \R^m_{\geq 0}$, the solutions to the following system of equations
\begin{equation}\label{eq:elimsyst}
\left\{\begin{array}{cl}
\dot{u}_i=0 &\text{ for }\quad i=1,\dots,m, \\[2pt]
\rho(\omega^j)\cdot u= \rho(\omega^j) \cdot u_0 & \text{ for }\quad j=1,\dots,d.
\end{array}\right.
\end{equation}
Since every vector $\omega\in S_\mmU^\perp$ defines a linear relation among the equations $\dot{u}_i=0$, 
$d$ of these equations are redundant. Removal of redundant equations leads to a system with $m$ equations and $m$ variables, whose solution set is independent of the choice of a basis of $S_\mmU^\perp$. 
 The solutions to this system might be used to find a parameterisation of the steady state variety and provide a first  step in finding the steady states in a particular stoichiometric compatibility class.
In order to study solutions to \eqref{eq:elimsyst}, we focus on certain classes of sets $\mmU$ and impose restrictions on the kinetics.

\begin{definition}\label{def:nonint}
A set $\mmU\subseteq \mmS$ is \emph{noninteracting} if it does not contain a pair of interacting species nor self-interacting species.
A set $\mmU\subseteq \mmS$ is \emph{reactant-noninteracting} if it does not contain a pair of species interacting nor self-interacting species in the \emph{reactant} of any reaction.
\end{definition}
 
We assume the domain  of a kinetics is of the form $\Omega_0=\R_{\ge 0}^m\times \Omega$ with $\Omega\subseteq \R^{m'}_{\geq 0}$.

\begin{definition}\label{defUlinear} Let $\mmU$ be a reactant-noninteracting set, and $\kappa\colon \R^m_{\geq 0}\times \Omega\rightarrow \R^{p}_{\geq 0}$ be a kinetics with $\Omega\subseteq \R^{m'}_{\geq 0}$. The kinetics $\kappa$ is \emph{$\mmU$-linear} if, for every $r\in \mmR$, there exists a  function $v_r\colon\Omega\rightarrow \R_{\geq 0}$ such that $v_r(\Omega\cap \Rp^{m'})\subseteq \Rp$, and 
\[
\kappa_r(u,x)=\left\{\begin{array}{ll} 
u_iv_r(x) & \text{ if }\rho(y_r)_{i}=1, \\ 
\phantom{u_i}v_r(x)& \text{ if }\rho(y_r)= 0.
\end{array}\right.
\]
\end{definition}

For a $\mmU$-linear kinetics and fixed $x\in \Omega$, the system $\dot{u}=0$ is of the form
\[ \widetilde{A}(x) u + \widetilde{b}(x)=0,\]
with $\widetilde{A}(x)=(\widetilde{a}_{ij}(x))_{i,j\in\{1,\ldots,m\}}$, $\widetilde{b}(x)=(\widetilde{b}_i(x))_{i\in\{1,\ldots,m\}},$ given by
\begin{equation}\label{eq:sseqU}
\widetilde{a}_{ij}(x)=\sum\limits_{r\in\mmR,\,  (y_r)_j\neq 0}v_r(x)(y'_r-y_r)_i, \qquad 
\widetilde{b}_i(x)=\sum\limits_{r\in\mmR,\,  \rho(y_r)=0}v_r(x)(y'_r)_i.
\end{equation}
By letting $T_{u_0}=(\rho(\omega^1)\cdot u_0,\dots,\rho(\omega^d)\cdot u_0)$, and after removal of $d$ redundant equations among $\dot{u}=0$, system \eqref{eq:elimsyst} is  linear with   $m$ equations in the $m$ variables $u_1,\dots,u_m$,
\begin{equation}\label{eq:linelimsyst}
A(x) u+b(x,T_{u_0})=0.
\end{equation}
We refer to this system as the \emph{elimination system} (associated with $\mmU$ and $u_0$), and note that it is defined up to a choice of  basis and the removal of redundant equations.
If $\det(A(x))\not=0$, then \eqref{eq:linelimsyst} has  a unique solution. Our aim is to find the solution and decide whether it  is nonnegative.

\refstepcounter{theorem}\label{ref:reactnet-example}
\newcounter{mycounterReactnet}
\renewcommand{\themycounterReactnet}{\getrefnumber{ref:reactnet-example}\,(part\,\Alph{mycounterReactnet})}
\newtheorem{myexampleReactnet}[mycounterReactnet]{Example}

\begin{myexampleReactnet}\label{example:reactnet}
Consider the  reaction network with  kinetics $\k$,
\begin{align*}
U_1+X_2\ce{->[\k_1(u,x)]}U_2\ce{->[\k_2(u,x)]}&X_1+U_1&\\
U_3+X_1\ce{->[\k_3(u,x)]}U_4\ce{->[\k_4(u,x)]}&U_3+U_5& U_5\ce{<=>[\k_5(u,x)][\k_6(u,x)]}X_2.
\end{align*}
The set $\mmU=\{U_1,\dots,U_5\}$ is reactant-noninteracting, but not noninteracting because $U_3$ and $U_5$ are both in the product of one reaction. If $\k$ is $\mmU$-linear, then 
$\k_i(u,x)= u_i v_i(x)$ for $i=1,\dots,5$ and $\k_6(u,x)= v_6(x).$

A basis of $S_\mmU^\perp$ is composed by the nonnegative vectors $\omega^1=(1,1,0,0,0,0,0)$ and $\omega^2=(0,0,1,1,0,0,0)$. Given $u_0\in \R^5_{\geq 0}$, we have $T_{u_0} = (T_1,T_2)= (u_{0,1}+u_{0,2},u_{0,3}+u_{0,4})$.
After removing $\dot{u}_2=0,\dot{u}_4=0$, the elimination system \eqref{eq:linelimsyst} becomes:
\begin{align}
u_1+u_2  -T_1 &= 0  & (\rho(\omega^1)\cdot u = T_1)\  \nonumber\\ 
-v_1(x) u_1 + v_2(x) u_2 & =0 & (\dot{u}_1=0)\ \nonumber \\
u_3 + u_4 - T_2 &= 0 &  (\rho(\omega^2)\cdot u = T_2)\   \label{eq:elimsyst_example}\\
-v_3(x) u_3 + v_4(x) u_4 &=0 & (\dot{u}_3=0)\ \nonumber \\ 
v_4(x)u_4 -v_5(x) u_5 + v_6(x) &=0 & (\dot{u}_5=0).\nonumber
\end{align}
Therefore, we have
\[ A(x)= \begin{pmatrix}
1 & 1 & 0 & 0 & 0 \\
-v_1(x) & v_2(x) & 0 & 0 & 0 \\  
0 & 0 & 1 & 1 & 0 \\
0 & 0 & -v_3(x) & v_4(x) & 0 \\
0 & 0 & 0 & v_4(x) & -v_5(x) 
\end{pmatrix}, \qquad b(x,T_{u_0})= \begin{pmatrix}
-T_1 \\ 0 \\ -T_2 \\ 0 \\ v_6(x)
\end{pmatrix}.
\]

\end{myexampleReactnet}

\section{The multidigraph $\mmG_\mmU$}\label{sec:muldi}
In preparation for the main results, we introduce a  multidigraph with $m+1$ nodes  (a digraph where self-edges and parallel edges are allowed \cite{Saez:PosSol}).  A key point is that the first $m$ rows of the  Laplacian of this multidigraph agree with the  matrix $\widetilde{A}(x)$ extended by the vector $\widetilde{b}(x)$ in \eqref{eq:sseqU}. See Section \ref{sec:proofs} for details.
We assume a reaction network $(\mmC,\mmR)$ on $\mmS$ is given together with a reactant-noninteracting set $\mmU\subseteq\mmS$.

We define $\mmRu$ to be the set of reactions that involve species in $\mmU$,
\[ \mmRu=\{r\in \mmR \st   \rho(y_r)\neq 0 \text{ or }  \rho(y'_r)\neq 0\},\]
and $\Lambda_\mmU$ to be the subset of $\mmRu$ of the reactions such that 
 the product has at least two species in $\mmU$ or one self-interacting species in $\mmU$:
\[
\Lambda_\mmU=
  \left\{r\in\mmR \ \left|\  \sum\limits_{U_i\in\mmU}(y_r)_i= 1 \text{ and }\sum\limits_{U_i\in\mmU}(y'_r)_i> 1\right.\right\}.
\] 
If  $\mmU$ is noninteracting, then   $\Lambda_\mmU=\emptyset$.

\begin{definition}\label{def:graphU} 
Let $\mmU\subseteq \mmS$ be a reactant-noninteracting set, $\kappa$ a $\mmU$-linear kinetics and $x\in \Omega$. 
Let $\mmG_{\mmU}=(\mmN_{\mmU}, \mmE_{\mmU})$ be the labeled multidigraph with
$\mmN_{\mmU}=\mmU\cup \{*\}$ and $\mmEu=\mmEu^+\cup \mmEu^-$, where 
\begin{align*}
\mmEu^+= & \ \{ U_j \ce{->[(y_r')_iv_r(x)]} U_i \st r\in \mmRu,\ (y_r)_{j}=1\text{ and }  (y'_r)_{i}\neq 0\textrm{ for }i\neq j\}\ \cup \\
	& \ \{U_j \ce{->[v_r(x)]}*\st r\in \mmRu\text{ with }(y_r)_{j}=1\text{ and }  \rho(y'_r)= 0\}\ \cup \\
	& \ \{* \ce{->[(y'_r)_iv_r(x)]}U_i\st r\in \mmRu\text{ with }\rho(y_r)= 0\text{ and } (y'_r)_{i}\neq 0 \}\quad \text{ and}\\
\mmEu^-=	&\ \{U_j  \ce{->[-\lambda_rv_r(x)]} *  \st r\in \Lambda_\mmU\text{ with }(y_r)_{j}=1\text{ and } \lambda_r=\sum_{i=1}^m(y'_r)_{i}-1\}.
\end{align*}
\end{definition}

Explicit reference to $x$ in  $\mmGu$ is omitted.
Edges in $\mmEu^+$ have nonnegative labels and edges in $\mmEu^-$ have nonpositive labels. A label of an edge is zero only if $v_r(x)=0$, which happens only if $x$ has  zero entries. 
The multidigraph $\mmG_{\mmU}$ might have  parallel edges between any pair of nodes but no self-edges. 
An edge in $\mmEu$ corresponds to a unique reaction in $\mmR_\mmU$.  
 A reaction  $r\in\mmRu\setminus \Lambda_\mmU$ with  $\rho(y_r)\neq 0$ gives rise to one edge in $\mmEu^+$, while if $\rho(y_r)=0$, then $r$  corresponds to as many edges in $\mmEu^+$ as  there are species  in the product of $r$. 
Every reaction in $\Lambda_\mmU$ gives rise to one edge in $\mmEu^-$, and also one edge in the first subset of  $\mmEu^+$ for every species in $\mmU$ that is  in the product but not in the reactant of the reaction.
Hence if $\mmU$ is noninteracting, then $\mmE_\mmU^-$ is empty and all labels are of the form $v_r(x)$ since $(y'_r)_i$ is either $0$ or $1$.
In this case, the multidigraph $\mmGu$ coincides with the multidigraph  defined in \cite{Saez:reduction} after removal of self-edges.

\begin{myexampleReactnet}\label{example:reactnet0}
Here $\mmR=\mmR_\mmU$ and the set $\Lambda_\mmU$ consists of one reaction, namely $U_4\rightarrow U_3+U_5$. The multidigraph $\mmGu$ is
\begin{center}
\begin{tikzpicture}[inner sep=1.2pt]
\node (U3) at (0,2) {$U_3$};
\node (U4) at (2,2) {$U_4$};
\node (U5) at (4,2) {$U_5$};
\node (U2) at (-2,2) {$U_2$};
\node (U1) at (-4,2) {$U_1$};
\node (*) at (6,2) {$*$.};

\draw[->] (U3) to[out=10,in=170] node[above,sloped]{\footnotesize $v_3(x)$}(U4);
\draw[->] (U4) to[out=190,in=-10] node[below,sloped]{\footnotesize $v_4(x)$}(U3);
\draw[->] (U4) to[out=0,in=180] node[above,sloped]{\footnotesize $v_4(x)$}(U5);
\draw[->] (U4) to[out=35,in=145] node[above,sloped]{\footnotesize $-v_4(x)$}(*);
\draw[->] (U5) to[out=10,in=170] node[above,sloped]{\footnotesize $v_5(x)$}(*);
\draw[->] (*) to[out=190,in=-10] node[below,sloped]{\footnotesize $v_6(x)$}(U5);
\draw[->] (U1) to[out=10,in=170] node[above,sloped]{\footnotesize $v_1(x)$}(U2);
\draw[->] (U2) to[out=190,in=-10] node[below,sloped]{\footnotesize $v_2(x)$}(U1);
\end{tikzpicture}
\end{center}
\end{myexampleReactnet}

\bigskip
As will be stated below, the solution to the elimination system \eqref{eq:linelimsyst} can be expressed as a rational function in the labels of $\mmG_\mmU$, $T_{u_0}$ and $\omega^1,\dots,\omega^d$. In order to give these functions explicitly, some general definitions are required.
Consider a multidigraph $\mmG=(\mmN,\mmE)$ with no self-loops. 
Given $\mmN_0\subseteq \mmN$, $\mmG|_{\mmN_0}$ is the submultidigraph of $\mmG$ \emph{induced} by $\mmN_0$, that is, the multidigraph with node set $\mmN_0$ and all edges of $\mmG$ between pairs of nodes in $\mmN_0$.
A \emph{cycle} is a closed directed path with no repeated nodes.
A \emph{tree} is a directed subgraph of $\mmG$ such that the underlying undirected graph is connected and acyclic. A tree $\tau$ is \emph{rooted} at the node $N$, if $N$ is the only node without outgoing edges. In that case, there is a unique directed path from every node in $\tau$ to $N$. A \emph{forest} $\zeta$ is a directed subgraph of $\mmG$ whose connected components are trees. A tree (resp.~forest) is called a \emph{spanning tree} (resp.~\emph{spanning forest}) if the node set is $\mmN$.  For a spanning tree $\tau$ (resp.~a spanning forest $\zeta$) we use $\tau$ (resp.~$\zeta$) to refer to the edge set of the graph and to the graph itself indistinctly, as the node set in this case is $\mmN$.

If $\pi\colon\mmE \rightarrow \R$ is an edge labeling of $\mmG$, then any submultidigraph $\mathcal{G}'$ of $\mmG$ inherits a labeling from $\mmG$.  A labeling can be extended to  $\mathcal{P}(\mmE)$ by
\[
\pi\colon\mathcal{P}(\mmE)\to \R,\quad \pi(\mmE')=\prod_{e\in\mmE'}\pi(e) \quad \text{for}\quad\mmE'\subseteq \mmE.
\]
For a node $N$ of $\mmG$, $\Theta_\mmG(N)$ is the set of spanning trees of $\mmG$ rooted at $N$ and we let
\begin{equation*}
 \Upsilon_\mmG(N)=\sum_{\tau\in\Theta_\mmG(N)}\pi(\tau).
\vspace{-.3cm}
\end{equation*}
For $N_1,N_2,N_3\in \mmN$, define
\[
\Theta^{N_3}_\mmG(N_1,N_2)=\left\{\zeta\ \left| \begin{array}{l}\zeta \text{ is a spanning forest of $\mmG$ with two connected components:} \\
\text{a tree rooted at }N_2\text{ containing }N_1\text{ and a tree rooted at }N_3 \end{array}\right. \hspace{-0.2cm}\right\}
\]
 and
\begin{equation*}\label{eq:defUpsilon*}
\Upsilon^{N_3}_\mmG(N_1,N_2)=\sum_{\zeta\in\Theta^{N_3}_\mmG(N_1,N_2)} \pi(\zeta).
\end{equation*}

We return now to the multidigraph $\mmG_\mmU$ and the vector subspace $S_\mmU^\perp$.
The first result is a lemma that helps to understand the structure of $\mmGu$ imposed by the basis.

\begin{lemma}\label{lemma:cons_laws}
Let  $\omega\in S^\perp$ be nonnegative with support $\mmH\subseteq\mmS$. 
A reaction $r$ has a species in $\mmH$ in the product if and only if it has one in the reactant.
\end{lemma}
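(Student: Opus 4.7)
The plan is to exploit the basic defining property of $S^\perp$: for every $\omega \in S^\perp$ and every reaction $r \in \mmR$ we have $\omega \cdot (y'_r - y_r) = 0$, that is $\omega \cdot y_r = \omega \cdot y'_r$. The lemma then follows by a sign argument using nonnegativity of $\omega$ and of the complex vectors.

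First I would observe that, since complexes belong to $\Z_{\geq 0}^{\mmS}$, both $\omega \cdot y_r$ and $\omega \cdot y'_r$ are sums of products of nonnegative numbers:
\[
\omega \cdot y_r = \sum_{S_i \in \mmS} \omega_i (y_r)_i, \qquad \omega \cdot y'_r = \sum_{S_i \in \mmS} \omega_i (y'_r)_i.
\]
Each summand is nonnegative, so each of these sums is strictly positive if and only if at least one summand is strictly positive. A summand $\omega_i (y_r)_i$ is strictly positive precisely when $\omega_i > 0$ (i.e.\ $S_i \in \mmH$) and $(y_r)_i > 0$ (i.e.\ $S_i$ is in the reactant $y_r$). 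An analogous statement holds for $y'_r$ and the product.

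Combining these two equivalences with the identity $\omega \cdot y_r = \omega \cdot y'_r$ yields the conclusion: the left-hand side is positive exactly when some species of $\mmH$ appears in the reactant, and the right-hand side is positive exactly when some species of $\mmH$ appears in the product, so the two events coincide.

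There is no real obstacle here; the argument is essentially a one-line sign computation once one writes out $\omega \cdot y_r = \omega \cdot y'_r$ and uses that all quantities involved are nonnegative. The only point to be careful about is to invoke nonnegativity of both $\omega$ and of the stoichiometric coefficients $(y_r)_i, (y'_r)_i$, without which sums of the displayed form could vanish despite containing nonzero terms.
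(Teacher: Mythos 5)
Your argument is correct and is essentially identical to the paper's proof: both use $\omega\cdot y_r=\omega\cdot y'_r$ together with nonnegativity of $\omega$ and of the complex vectors to conclude that each side is nonzero exactly when a species of $\mmH$ appears in the corresponding complex. No issues.
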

\begin{proof}
From  $0=\omega \cdot (y'_r-y_r)$, we have $\omega \cdot y'_r=\omega\cdot y_r \geq 0$, since $\omega$ is nonnegative. Using also that $y_r,y_r'$  are nonnegative vectors, $r$ has a species in $\mmH$ in the product  if and only if $\omega\cdot y'_r\neq 0$, if and only if $\omega\cdot y_r \neq 0$, if and only if $r$ has a species in $\mmH$ in the reactant.
\end{proof}

Any subset $\mmH\subseteq \mmU$ of a reactant-noninteracting (noninteracting) set is itself a reactant-noninteracting (noninteracting) set and a $\mmU$-linear kinetics is in particular $\mmH$-linear. Hence, the multidigraph $\mmG_\mmH$ is well defined.
Consider the  subsets of $\mmU$ defined by the supports of the basis vectors  $\omega^1,\dots,\omega^d$ of $S^\perp$,
\begin{equation}\label{eq:defUi}
\mmU_i= \supp(\omega^i),\qquad \mmU_0={\mmU}\setminus \bigcup\limits_{i=1}^d \mmU_i  =\{U_i\in \mmU \mid \omega_i=0, \ \textrm{for all }\omega\in S_\mmU^\perp\}.
\end{equation}
 Note that $\mmU_0$ is independent of the choice of basis of $S_\mmU^\perp$.
Consider $\Lambda_{\mmU_0}\subseteq \Lambda_{\mmU}$, that is, the set of reactions 
for which the reactant has one species in $\mmU_0$, and the product has at least two species in $\mmU_0$ or one self-interacting species in $\mmU_0$. 
In view of Lemma \ref{lemma:cons_laws},  given $i>0$, the reactant of a reaction has a species in $\mmU_i$ if and only if the product does. 
Hence the reactions in $\Lambda_{\mmU_0}$ have no species in $\mmU_i$ for any $i>0$, as any such reaction would have  two species in $\mmU$ in the reactant, one in $\mmU_i$ and one in $\mmU_0$.

We consider the multidigraphs $\mmG_{\mmU_i}$, defined from the reactant-non\-inte\-racting sets $\mmU_i$. These  should not be confused with the submultidigraphs $\mmGu|_{\mmU_i}$. The two multidigraphs might or might not agree  (for an elaboration on this, see the proof of Lemma \ref{lemma:sepcomp}).
Additionally, define the multidigraphs
\begin{equation*} 
\mmG_0=\mmG_{\mmU}|_{\mmU_0\cup\{*\}},\qquad
 \mmG_{0,i}=\mmG_\mmU|_{\mmU_0\cup\mmU_i\cup\{*\}}, \quad i=1,\dots,d.
\end{equation*}

A basis of $S_\mmU^\perp$  is said to be \emph{nonnegative} if the components of the basis vectors are nonnegative, and to have \emph{disjoint} supports if $\mmU_i\cap\mmU_j=\emptyset$ for all $i\not=j$ with $i,j>0$. 
In the latter case, the sets $\mmU_i$, $i=0,\dots,d$, form a partition of $\mmU$. Further, $S_\mmU^\perp$ is the direct sum of one-dimensional vector subspaces. It follows that the partition defined by any basis with disjoint support is independent of the basis.

The node set of a connected component of $\mmG_\mmU$ that does not contain $*$ agrees with one of the sets $\mmU_i$ for some $i$  (see  below). With this in mind, we let $C_\mmU\subseteq \{1,\dots,d\}$ be the set of indices $i>0$ such that $\mmU_i$ is the node set of some connected component of $\mmG_\mmU$ (this excludes the component with $*$).

\begin{theorem}[\textbf{Solution to the elimination system}]\label{thm:elim}
Let $\mmU$ be a reactant-noninteracting set, $\k$ a $\mmU$-linear kinetics defined on $\Rnn^m\times \Omega$, and $(u_0,x)\in\Rnn^m\times \Omega$.   
Assume that $S_\mmU^\perp$ has a nonnegative basis  $\{\omega^1,\dots,\omega^d\}$ with disjoint supports, and  let  the corresponding vector of total amounts be $T_{u_0} = (T_1,\dots,T_d)$. 
 For $i=1,\dots,d$ choose,   $j_1,\dots,j_d$ such that $U_{j_i}\in\mmU_i$, and define 
\[D(i)=\sum_{k \mid U_k\in\mmU_i}\omega^i_k\Upsilon^*_{\mmG_{\mmU_i}}\hspace{-0.05cm}(U_{j_i},U_k).\]
Then the following holds.
\begin{enumerate}[(i)] 
\item \textbf{Uniqueness criterion}. $\det(A(x)) =(-1)^{m-d}\Upsilon_{\mmG_0}(*) D(1) \cdot\ldots \cdot D(d)$.

\item \textbf{Expression}. If $\det(A(x))\not=0$, the  solution to the elimination system \eqref{eq:linelimsyst} is
\begin{align*}
u_\ell & = \dfrac{T_{k} \Upsilon^*_{\mmG_{\mmU_k}}\hspace{-0.05cm}(U_{j_k},U_\ell)}{D(k)} & \qquad \text{if } U_\ell \in \mmU_k, \ k>0, \\
u_\ell & = \frac{\Upsilon_{\mmG_0}\hspace{-0.05cm}(U_\ell)}{\Upsilon_{\mmG_0}\hspace{-0.05cm}(*)}+ {\mathlarger \sum\limits_{k=1,k\notin C_\mmU}^{d}}\frac{T_{k}
\Upsilon^*_{\mmG_{0,k}}\hspace{-0.05cm}(U_{j_k},U_\ell )}{\Upsilon_{\mmG_0}\hspace{-0.05cm}(*)D(k)} & \qquad \text{ if } U_\ell \in \mmU_0.
\end{align*}

\item \textbf{Nonnegativity}. Assume  that for every reaction $r\in\Lambda_{\mmU_0}$, at most one species $U_i\in \mathcal{U}_0$ in the product of  $r$ ultimately produces the only species in $\mmU_0$ in the reactant of $r$ via $\mathcal{U}_0$,  and, if such a species $U_i$ exists, then $(y'_r)_i=1$. 
Then $\Upsilon_{\mmG_0}\hspace{-0.05cm}(*)$, $\Upsilon_{\mmG_0}\hspace{-0.05cm}(U_\ell)$, $\Upsilon^*_{\mmG_{\mmU_k}}\hspace{-0.05cm}(U_{j_k},U_\ell)$,  $\Upsilon^*_{\mmG_{0,k}}\hspace{-0.05cm}(U_{j_k},U_\ell )$  are nonnegative  for $k=1,\dots,d$ and $\ell$ appropriately chosen. 
In particular, if $\det(A(x))\neq 0$, then the solution to the elimination system \eqref{eq:linelimsyst} is nonnegative.
 
\end{enumerate}
\end{theorem}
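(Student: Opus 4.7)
The plan is to deduce parts (i) and (ii) from a block decomposition of $A(x)$ induced by the partition $\mmU=\mmU_0\cup\mmU_1\cup\cdots\cup\mmU_d$, and then to deduce (iii) by applying the nonnegativity criterion of \cite{Saez:PosSol} to each of the relevant multidigraphs.

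First I would establish that, after ordering variables so that $\mmU_0$ precedes $\mmU_1,\ldots,\mmU_d$, the matrix $\widetilde{A}(x)$ is block upper triangular. Applying Lemma~\ref{lemma:cons_laws} to each $\omega^j$ with $j>0$, any reaction whose reactant contains a species in $\mmU_j$ must have a species of $\mmU_j$ in its product; since $\mmU$ is reactant-noninteracting, all $\mmU$-species in the product of such a reaction lie in $\mmU_j\cup\mmU_0$. Consequently the diagonal blocks of $\widetilde{A}(x)$ are indexed by $\mmU_0,\mmU_1,\ldots,\mmU_d$ and only the $\mmU_0$-row of blocks can be nonzero off the diagonal. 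Since $\supp(\omega^j)\subseteq\mmU_j$, the $d$ conservation-law substitutions replace one redundant row in each $\mmU_j$-block ($j>0$) by a row supported in $\mmU_j$, so $A(x)$ inherits the block triangular form with diagonal blocks $A_0$ and $A_1,\ldots,A_d$, giving $\det(A(x))=\det(A_0)\prod_{j=1}^d\det(A_j)$.

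Next I would identify each factor as a forest sum. The block $A_0$ is the minor of the Laplacian of $\mmG_0$ obtained by deleting the row and column of $*$, so the classical Matrix-Tree theorem yields $\det(A_0)=(-1)^{|\mmU_0|}\Upsilon_{\mmG_0}(*)$. Each $A_j$ with $j>0$ is the $\mmU_j$-minor of the Laplacian of $\mmG_{\mmU_j}$ with the $j_j$-th row replaced by $\omega^j$; expanding along that row and invoking the all-minors Matrix-Tree theorem expresses each cofactor as a two-rooted spanning forest of $\mmG_{\mmU_j}$ with one tree rooted at $U_{j_j}$ and one at $*$, yielding $\det(A_j)=(-1)^{|\mmU_j|-1}D(j)$. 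Since $|\mmU_0|+\sum_j(|\mmU_j|-1)=m-d$, this proves (i). For (ii) I would solve the block triangular system from the bottom up: the decoupled subsystems $A_j u^{(j)}=-b_j$ have right-hand side $T_j$ in the conservation-law row of $\mmU_j$ and zero elsewhere (since the entries of $\widetilde{b}$ inside $\mmU_j$ vanish by Lemma~\ref{lemma:cons_laws}), and Cramer's rule inside the $\mmU_j$-block gives the stated formula for $u_\ell\in\mmU_k$. Back-substitution into $A_0 u^{(0)}=-b_0-\sum_{k>0}B_k u^{(k)}$ produces the $\mmU_0$-formula: $\Upsilon_{\mmG_0}(U_\ell)/\Upsilon_{\mmG_0}(*)$ is the Matrix-Tree expression for $-A_0^{-1}b_0$, and the coupling sum ranges only over $k\notin C_\mmU$ because $B_k=0$ exactly when no reaction has reactant in $\mmU_k$ and a $\mmU_0$-species in its product, which is equivalent to $\mmU_k$ being a connected component of $\mmG_\mmU$; each nonzero coupling term is recognized as the spanning forest sum $\Upsilon^*_{\mmG_{0,k}}(U_{j_k},U_\ell)$ after combining $A_0^{-1}$, $B_k$, and $A_k^{-1}$.

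For part (iii), the challenge is that edges in $\mmE_\mmU^-$ carry negative labels and the forest polynomials may acquire negative summands. I would apply the nonnegativity criterion of \cite{Saez:PosSol}, which states that on a signed multidigraph of this kind each rooted forest sum is nonnegative provided each negative edge is dominated in every spanning configuration by positive edges arising from the same reaction. The hypothesis on $\Lambda_{\mmU_0}$---that at most one product species $U_i\in\mmU_0$ ultimately produces the reactant species via $\mmU_0$, with $(y'_r)_i=1$---is precisely the combinatorial condition that every cycle of $\mmG_0$ through the reactant node using only $\mmU_0$-nodes cancels the negative contribution of the edge to $*$ coming from $r$. This yields nonnegativity of $\Upsilon_{\mmG_0}(*)$ and $\Upsilon_{\mmG_0}(U_\ell)$, and the same argument in $\mmG_{0,k}$ yields nonnegativity of $\Upsilon^*_{\mmG_{0,k}}(U_{j_k},U_\ell)$. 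For each $\mmG_{\mmU_k}$ with $k>0$, the positivity of $\omega^k$ on $\mmU_k$ together with $\omega^k\cdot y'_r=\omega^k\cdot y_r$ for $r\in\Lambda_{\mmU_k}$ enforces the same cancellation automatically, so $\Upsilon^*_{\mmG_{\mmU_k}}(U_{j_k},U_\ell)\ge 0$ without a further hypothesis. Combined with the sign $(-1)^{m-d}$ in (i), the ratios in (ii) are nonnegative. The main obstacle will be the verification underlying this paragraph: matching the network-level hypothesis on $\Lambda_{\mmU_0}$ with the graph-theoretic cancellation condition requires an inclusion/exclusion bijection between spanning forests of $\mmG_0$ that use a negative edge $U_j\to*$ and spanning forests using the partnered positive edges of the same reaction, and this bijection must be compatible with cycles in $\mmG_0|_{\mmU_0}$ through $U_j$. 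A secondary, largely notational, difficulty is keeping straight the three multidigraphs $\mmG_\mmU$, $\mmG_{\mmU_k}$, and $\mmG_{0,k}$, given that $\mmG_{\mmU_k}$ and the induced submultidigraph $\mmG_\mmU|_{\mmU_k\cup\{*\}}$ differ precisely on edges arising from products that mix $\mmU_k$ and $\mmU_0$.
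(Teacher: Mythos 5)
Your plan for parts (i) and (ii) takes a genuinely different route from the paper. The paper applies \cite[Proposition 3]{Saez:PosSol} to the full system at once, obtaining the solution as a ratio of spanning-forest sums over the whole multidigraph $\mmG_\mmU$, and then proves a dedicated factorization lemma (Lemma \ref{lemma:sepcomp}) showing that these global forest sums split as products of $\Upsilon_{\mmG_0}(*)$, $\Upsilon^*_{\mmG_{\mmU_i}}(U_{j_i},\cdot)$ and $\Upsilon^*_{\mmG_{0,k}}(U_{j_k},\cdot)$. You instead exploit the block-triangular structure of $A(x)$ directly (which the paper also records, as the form of $A$ with diagonal blocks $A_1,\dots,A_d$ and a full bottom block), apply the classical and all-minors Matrix-Tree theorems blockwise, and recombine by back-substitution. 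Your block-structure argument via Lemma \ref{lemma:cons_laws} and reactant-noninteraction is correct, the vanishing of $\widetilde b$ on $\mmU_j$ for $j>0$ is correct, and the sign bookkeeping $(-1)^{m_0+\sum_j(m_j-1)}=(-1)^{m-d}$ matches (i). Your approach is arguably more elementary in that it avoids the bespoke machinery of \cite{Saez:PosSol} for (i)--(ii); what the paper's route buys is that the single factorization lemma simultaneously handles the determinant, the decoupled blocks, and the coupling terms. For part (iii) your route is essentially the paper's: both reduce to the P-graph/edge-partition nonnegativity criterion of \cite{Saez:PosSol}, with the hypothesis on $\Lambda_{\mmU_0}$ supplying the cancellation condition for negative edges sourced in $\mmU_0$, and the conservation law $\omega^k$ supplying it automatically for negative edges sourced in $\mmU_k$, $k>0$ (the paper makes the latter precise via integer vectors $z^q$ built from the production paths; your sketch of this step is the right idea but is the part of (iii) that still needs a real argument).

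One caveat: you describe the discrepancy between $\mmG_{\mmU_k}$ and the induced submultidigraph $\mmG_\mmU|_{\mmU_k\cup\{*\}}$ as ``largely notational.'' It is not; it is the technical heart of the recombination step. The negative edges from $\mmU_k$ to $*$ carry label $-\lambda_r v_r(x)$ in $\mmG_\mmU$ (and hence in $\mmG_{0,k}$) but $-\lambda^k_r v_r(x)$ in $\mmG_{\mmU_k}$, where $\lambda_r-\lambda^k_r=\sum_{j\in\mmU_0}(y'_r)_j$ counts the product species that land in $\mmU_0$. Identifying your coupling term $-A_0^{-1}B_kA_k^{-1}b^k$ with $T_k\,\Upsilon^*_{\mmG_{0,k}}(U_{j_k},U_\ell)/D(k)$, and likewise matching $\det(A_k)$ to forests of $\mmG_{\mmU_k}$ rather than of the induced subgraph, requires an explicit resummation in which each negative edge to $*$ absorbs the labels of its partner edges into $\mmU_0$ arising from the same reaction; this is exactly the surjection $\gamma$ and the identity $\pi(e)+\sum_{e'}\pi(e')=-\lambda^i_rv_r(x)$ in the paper's proof of Lemma \ref{lemma:sepcomp}. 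Your proposal would need to supply this identity (or an equivalent inclusion--exclusion on forests) to close parts (i) and (ii); without it, the blockwise Matrix-Tree computation produces forest sums over the wrong labeled graphs. A second, minor imprecision: $B_k=0$ is not equivalent to $k\in C_\mmU$ (a reaction in $\Lambda_\mmU$ whose product stays inside $\mmU_k$ with multiplicity $>1$ creates a negative edge to $*$ but no edge to $\mmU_0$); this is harmless because $\Upsilon^*_{\mmG_{0,k}}(U_{j_k},U_\ell)=0$ whenever $B_k=0$, but it should be stated that way.
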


The proof of Theorem \ref{thm:elim} is given in Section \ref{sec:proofs}. Regarding the assumption of Theorem \ref{thm:elim}(iii),  a species $U_i$ ultimately produces another species $U_j$ via $\mmU_0$ if and only if the multidigraph $\mmG_0$ contains a path from $U_i$ to $U_j$ that does not go through $*$.
Under this assumption, $\det(A(x))\neq 0$ if and only if there exists a spanning tree rooted at $*$ in $\mmG_0$ and, for each $j\in \{1,\dots,d\}$, there exists a spanning forest of $\mmG_{\mmU_j}$ composed of two rooted trees, one with root $*$, and the other containing $U_{j_i}$. Similar conclusions can be drawn about the positivity of the solution. 

Theorem \ref{thm:elim}(ii) shows that the solution to the elimination system \eqref{eq:linelimsyst} is a rational function in $v_r(x)$ and the total amounts. If the kinetics is mass-action, then the  solution is a rational function in $x$, the reaction rate constants and the total amounts. 

If we aim to parameterize the steady state manifold by elimination of reactant-noninteracting species, then the set of species cannot contain the support of a vector in $S^\perp$, and hence  $S_\mmU=\{0\}$ and $\mmU=\mmU_0$. In this case  we obtain the simple expression
\[ u_\ell  = \frac{\Upsilon_{\mmGu}(U_\ell)}{\Upsilon_{\mmGu}(*)}, \qquad \textrm{for all }\quad U_\ell \in \mmU.\]

The theorem is illustrated on the running example, before elaborating on the necessity of a basis of $S_\mmU^\perp$ with disjoint support and discussing how to check the conditions of the theorem graphically.

\begin{myexampleReactnet}\label{example:reactnet2}
The vectors $\omega^1,\omega^2$ are nonnegative and have disjoint support. This gives $\mmU_1=\{U_1,U_2\}$, $\mmU_2=\{U_3,U_4\}$ and $\mmU_0=\{U_5\}$. Only the set $\mmU_1$ is the node set of a connected component of $\mmG_\mmU$ and hence $C_\mmU=\{1\}$. We choose $j_1=2, j_2=4$.
We consider the following multidigraphs

\medskip
\begin{center}
\begin{tikzpicture}[inner sep=1.2pt]
\node (G1) at (-5.2,2.3) {$\mmG_{\mmU_1}$:};
\node (U2) at (-2,2) {$U_2$};
\node (U1) at (-4,2) {$U_1$};
\node (*1) at (-1,2) {$*$};
\node (G2) at (0.3,2.3) {$\mmG_{\mmU_2}$:};
\node (U3) at (1,2) {$U_3$};
\node (U4) at (3,2) {$U_4$};
\node (*2) at (4,2) {$*$};
\node (G0) at (-4.7,1) {$\mmG_0$:};
\node (U5) at (-4,0.5) {$U_5$};
\node (*) at (-2,0.5) {$*$};

\node (G02) at (-0.7,1) {$\mmG_{0,2}$:};
\node (U32) at (0,0.5) {$U_3$};
\node (U42) at (2,0.5) {$U_4$};
\node (U52) at (4,0.5) {$U_5$};
\node (*2) at (6,0.5) {$*$};

\draw[->] (U3) to[out=10,in=170] node[above,sloped]{\footnotesize $v_3(x)$}(U4);
\draw[->] (U4) to[out=190,in=-10] node[below,sloped]{\footnotesize $v_4(x)$}(U3);
\draw[->] (U5) to[out=10,in=170] node[above,sloped]{\footnotesize $v_5(x)$}(*);
\draw[->] (*) to[out=190,in=-10] node[below,sloped]{\footnotesize $v_6(x)$}(U5);
\draw[->] (U1) to[out=10,in=170] node[above,sloped]{\footnotesize $v_1(x)$}(U2);
\draw[->] (U2) to[out=190,in=-10] node[below,sloped]{\footnotesize $v_2(x)$}(U1);

\draw[->] (U32) to[out=10,in=170] node[above,sloped]{\footnotesize $v_3(x)$}(U42);
\draw[->] (U42) to[out=190,in=-10] node[below,sloped]{\footnotesize $v_4(x)$}(U32);
\draw[->] (U42) to[out=0,in=180] node[above,sloped]{\footnotesize $v_4(x)$}(U52);
\draw[->] (U42) to[out=35,in=145] node[above,sloped]{\footnotesize $-v_4(x)$}(*2);
\draw[->] (U52) to[out=10,in=170] node[above,sloped]{\footnotesize $v_5(x)$}(*2);
\draw[->] (*2) to[out=190,in=-10] node[below,sloped]{\footnotesize $v_6(x)$}(U52);
\end{tikzpicture}
\end{center}
 Then,
\[\Upsilon_{\mmG_0}(*)= v_5(x), \quad D(1)= v_1(x)+v_2(x)\quad \text{and} \quad D(2)= v_3(x)+v_4(x).\]
If $x\in \Omega\cap \R^2_{>0}$, these terms are all positive, and by Theorem \ref{thm:elim}(i) system \eqref{eq:elimsyst_example} has a unique solution. 
The solution  for $\ell=1,\dots,4$ is found using the first expression in Theorem \ref{thm:elim}(ii), considering the multidigraph $\mmG_{\mmU_1}$ for $u_1,u_2$ and the multidigraph $\mmG_{\mmU_2}$ for $u_3,u_4$.
Letting $T=T_{u_0}$, we find 
\begin{align*}
&u_1={\frac {T_1v_2(x)}{v_1(x)+v_2(x)}},
\hspace{0.6em} u_2=\frac {T_1v_1(x)}{v_1(x)+v_2(x)}, 
\hspace{0.6em} u_3={\frac {T_2v_4(x)}{v_3(x)+v_4(x)}},
\hspace{0.6em} u_4={\frac {T_2v_3(x)}{v_3(x)+v_4(x)}}. 
\end{align*}
To find $u_5$, we use the second formula in Theorem \ref{thm:elim}(ii). The multidigraph $\mmG_{0,2}$ admits only one spanning forest with a connected component rooted at $U_5$ and containing $U_4$, and the other component rooted at $*$. Namely, one connected component is $*$ and the other is identified by the only path from $U_3$ to $U_5$. Therefore
$\Upsilon^*_{\mmG_{0,2}}(U_4,U_5) = v_3(x)v_4(x)$ and $\Upsilon_{\mmG_0}(U_5)= v_6(x),$ which gives
\[u_5=\frac {T_2v_3(x)v_4(x)}{v_5(x)(v_3(x)+v_4(x))}  + \frac{v_6(x)}{v_5(x)}. \]
This solution is nonnegative. Since  $\Lambda_{\mmU_0}=\emptyset$, we could as well have reached this conclusion by employing Theorem \ref{thm:elim}(iii).

\end{myexampleReactnet}

\medskip
\noindent
 \textbf{\textit{On connected components and elements of $S_\mmU^\perp$}.}
Let $\mmH$ be the node set of a connected component of   $\mmGu$ that does not contain $*$. Then, by the definition of $\mmG_\mmU$, every edge of $\mmGu|_{\mmH}$ belongs to the first subset of $\mmE^+_\mmU$ in Definition \ref{def:graphU}. Therefore any reaction  in $\mmR_\mmH$ has exactly one species  in the reactant as well as in the product in $\mmH$  and further $\mmH$ is  noninteracting. In this situation   the results of \cite{Fel_elim} apply (in the terminology of \cite{Fel_elim}, $\mmH$ is a cut), and  there exists a nonnegative vector $\omega^\mmH \in S_\mmH^\perp\subseteq S_\mmU^\perp$ with $\supp(\omega^\mmH)=\mmH$ and   $\omega^\mmH_i = 1$ for $U_{i} \in \mmH$. Further, 
the projection of any other vector of $S_\mmU^\perp$ on the components given by $\mmH$ is a scalar multiple of $\omega^\mmH$.
For example, the multidigraph $\mmGu$ in Example \ref{example:reactnet0}  readily shows that $(1,1,0,0,0)\in S_\mmU^\perp$.

Let $\mmH_1,\dots,\mmH_k$ be the node sets of the connected components of $\mmGu$ that do not contain $*$ and $\mmH_0\cup \{*\}$ be the node set  of the connected component that contains $*$. It follows that there is a direct sum decomposition
\[S_\mmU^\perp  = \bigoplus_{i=0}^k S_{\mmH_i}^\perp,\]
with $S_{\mmH_i}^\perp$ one-dimensional for $i=1,\dots,k$.
Therefore, whether or not $S_\mmU^\perp$ admits a nonnegative basis (with disjoint support) depends on whether $S_{\mmH_0}^\perp$ does.

\subsection{On bases with nondisjoint support}

The condition on disjoint support for  the basis of $S_\mmU^\perp$ in Theorem \ref{thm:elim} can be relaxed in the following sense. 
 A nonnegative basis $\{\omega^1,\ldots,\omega^d\}$ of $S_\mmU^\perp$ is said to be \emph{minimal} if there is not another nonnegative basis $\{\widehat\omega^1,\ldots,\widehat\omega^d\}$ with strictly smaller supports, that is, $\supp(\widehat\omega^i)\subseteq\supp(\omega^{k_i})$ for some $1\le k_i\le d$, $i=1,\ldots,d$, and strict inclusion in at least one case.

\begin{theorem}\label{thm:elim2} Let $\mmN=(\mmC,\mmR)$ be a reaction network on $\mmS$.
Let $\mmU$ be a reactant-noninteracting set, $\k$ a $\mmU$-linear kinetics defined on $\R^m_{\geq 0}\times \Omega$, and $(u_0,x)\in \Rnn^m\times\Omega$.   
Assume that $S_\mmU^\perp$ has a minimal nonnegative basis  $\{\omega^1,\dots,\omega^d\}$ and $\det(A(x))\neq 0$. Let $\mmU'\subseteq \mmU$ be the subset of species that are in the support of at least two basis vectors. 
Consider the associated elimination system \eqref{eq:linelimsyst}.

\begin{enumerate}[(i)] 
\item For any $U_\ell\in\mmU'$, the solution to the elimination system \eqref{eq:linelimsyst} is $u_\ell=0$.

\item For any 
 $U_\ell \in \mmU\setminus \mmU'$, the solution $u_\ell$ to the system  \eqref{eq:linelimsyst}  
can be found using  
Theorem \ref{thm:elim}(ii) on the submultidigraph of $\mmG_{\mmU}$ induced by $\mmU''=(\mmU\setminus \mmU') \cup \{*\}$ and the induced partition $\mmU_i'' = \mmU'' \cap \mmU_i$, for $i=0,\dots,d$.

\item In particular, if the condition in Theorem \ref{thm:elim}(iii) for $\mmU$ holds, then the solution to system \eqref{eq:linelimsyst} is nonnegative. 
\end{enumerate}
\end{theorem}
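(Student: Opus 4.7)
The proof proceeds in three stages matching (i), (ii), (iii), with (i) being the crux and (ii), (iii) then following by reduction to Theorem \ref{thm:elim}.

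For (i), the plan is to exhibit a block-triangular structure on $(\widetilde A(x),\widetilde b(x))$ with respect to the partition $\mmU=\mmU'\sqcup(\mmU\setminus\mmU')$. Two vanishing statements come out of Lemma \ref{lemma:cons_laws}. If $U_\ell\in\mmU'$ and $r$ satisfies $\rho(y_r)=0$, then $(y_r')_\ell=0$, so $\widetilde b_\ell(x)=0$: choose $\omega^j$ with $U_\ell\in\supp(\omega^j)$; then $\omega^j\cdot y_r=0$ forces $\omega^j\cdot y_r'=0$, and by nonnegativity $U_\ell$ cannot sit in the product. Likewise, if $U_\ell\in\mmU'$ and $U_k\in\mmU\setminus\mmU'$, then $\widetilde a_{\ell,k}(x)=0$: because $U_\ell$ lies in at least two supports and $U_k$ in at most one, one can pick $\omega^j$ with $U_\ell\in\supp(\omega^j)$ and $U_k\notin\supp(\omega^j)$; for any reaction with $(y_r)_k=1$, reactant-noninteraction gives $\omega^j\cdot y_r=0$, so again $U_\ell$ is absent from the product, and $(y_r)_\ell=0$ since $\ell\neq k$. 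Consequently
\[ \widetilde A(x)=\begin{pmatrix} \widetilde A^{\mmU',\mmU'}(x) & 0 \\ \widetilde A^{\mmU\setminus\mmU',\mmU'}(x) & \widetilde A^{\mmU\setminus\mmU',\mmU\setminus\mmU'}(x) \end{pmatrix},\qquad \widetilde b(x)=\begin{pmatrix} 0 \\ \widetilde b^{\mmU\setminus\mmU'}(x) \end{pmatrix}. \]
The restrictions $\omega^j|_{\mmU\setminus\mmU'}$ are nonzero (by minimality of the basis) and have pairwise disjoint supports, hence are linearly independent; so the $d$ redundant steady-state equations can all be chosen from the $\mmU\setminus\mmU'$ block. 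The resulting $A(x)$ retains the block-triangular shape and $\det(A(x))=\det(\widetilde A^{\mmU',\mmU'}(x))\cdot\det(Y(x))$ for a lower block $Y(x)$. Nonsingularity of $A(x)$ then forces $\widetilde A^{\mmU',\mmU'}(x)$ invertible, and the retained equations $\dot u_\ell=0$ for $U_\ell\in\mmU'$ read $\widetilde A^{\mmU',\mmU'}(x)\,u^{\mmU'}=0$, so $u^{\mmU'}=0$.

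For (ii), substituting $u_\ell=0$ for $U_\ell\in\mmU'$ leaves a linear system in $(u_\ell)_{U_\ell\in\mmU\setminus\mmU'}$: steady-state rows $\sum_{k:U_k\in\mmU\setminus\mmU'}\widetilde a_{\ell,k}(x)u_k+\widetilde b_\ell(x)=0$ for $U_\ell\in\mmU\setminus\mmU'$ and conservation laws $\sum_{k:U_k\in\mmU\setminus\mmU'}\omega^j_k u_k=T_j$. I would then verify that this reduced system is exactly the elimination system associated to the reactant-noninteracting set $\mmU\setminus\mmU'$, the induced sub-multidigraph $\mmG_\mmU|_{\mmU''}$, and the induced partition $\mmU_i''=\mmU''\cap\mmU_i$: the rows indexed by $U_\ell\in\mmU\setminus\mmU'$ restricted to $\mmU\setminus\mmU'$ columns collect exactly the contributions of edges of $\mmG_\mmU$ internal to $\mmU''$, while reactions with a $\mmU'$ reactant contribute nothing after substitution — matching the removal of those reactant nodes in $\mmU''$. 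Since $\{\omega^j|_{\mmU\setminus\mmU'}\}$ is nonnegative with pairwise disjoint supports, Theorem \ref{thm:elim}(ii) applies and produces the claimed formulas. For (iii), observe that $\mmU_0\cap\mmU'=\emptyset$, since any species in $\mmU_0$ has zero coordinate in every element of $S_\mmU^\perp$ and so cannot belong to $\mmU'$; hence $\mmU_0''=\mmU_0$ and the hypothesis of Theorem \ref{thm:elim}(iii) for $\mmU$ carries over verbatim to the reduced system, giving nonnegativity of $u_\ell$ for $U_\ell\in\mmU\setminus\mmU'$; $u_\ell=0$ for $U_\ell\in\mmU'$ is nonnegative by default.

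The principal obstacle is (i): it requires extracting the block-triangular shape from Lemma \ref{lemma:cons_laws} applied through the two distinct basis vectors witnessing each $U_\ell\in\mmU'$, together with a clean accounting of which redundant equations to drop so that $\det(A(x))$ factors along the block decomposition. Minimality enters precisely to guarantee that the restricted basis $\{\omega^j|_{\mmU\setminus\mmU'}\}$ has no zero vectors, which is what enables the $d$ redundant equations to be chosen entirely outside the $\mmU'$ block. Once (i) is secured, (ii) and (iii) are essentially bookkeeping on top of Theorem \ref{thm:elim}.
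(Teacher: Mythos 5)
Your proof of (i) is correct in substance but follows a genuinely different route from the paper's. You establish, via Lemma \ref{lemma:cons_laws} and reactant-noninteraction, that $\widetilde{b}_\ell(x)=0$ and $\widetilde{a}_{\ell k}(x)=0$ whenever $U_\ell\in\mmU'$ and $U_k\in\mmU\setminus\mmU'$, and then read $u^{\mmU'}=0$ off the block-triangular factorization of $\det(A(x))$. The paper instead fixes $U_\ell\in\mmH=\mmU_1\cap\mmU_2$, shows from minimality that $S_\mmH^\perp=\{0\}$, and applies Theorem \ref{thm:elim}(ii) to the smaller elimination system for $\mmH$, concluding $u_\ell=0$ because no edge of $\mmG_\mmH$ with target in $\mmH$ can have source $*$, so there is no spanning tree rooted at $\ell$. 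Your version is more elementary, staying at the level of the linear system and not invoking Theorem \ref{thm:elim} for part (i); the paper's reuses the graphical machinery already in place. Both rest on the same structural fact: a reaction whose reactant meets $\mmU\setminus\mmU'$, or avoids $\mmU$ entirely, cannot produce a species of $\mmU'$. Your (ii) and (iii) follow the paper's line (substitute $u_\ell=0$, identify the reduced system with the elimination system of the network obtained by deleting reactions with reactant in $\mmU'$, and note $\mmU'\cap\mmU_0=\emptyset$).

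Two points need more care. First, both you and the paper need each $\supp(\omega^i)$ to meet $\mmU\setminus\mmU'$, so that the $d$ discarded equations can be taken as $\dot u_{j_i}=0$ with $U_{j_i}\notin\mmU'$; otherwise a conservation-law row, whose support is all of $\mmU_i$, lands in the $\mmU'$ block and destroys your triangular structure. You attribute this to minimality alone, but minimality of a nonnegative basis does not imply it as a statement of pure linear algebra: the basis $(1,1,0,0)$, $(0,0,1,1)$, $(1,0,1,0)$ of its own span is minimal in the sense of the definition (every support already has the minimum achievable cardinality), yet the support of the third vector consists exactly of the coordinates lying in two supports. The fact you need is nevertheless true here, because such a subspace cannot arise as $S_\mmU^\perp$ for a reactant-noninteracting $\mmU$ (a reaction vector with nonzero projection onto those coordinates would force two $\mmU$-species into one reactant); the paper asserts the choice ``is always possible'' without proof either, so this is a shared, repairable gap rather than a defect of your approach. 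Second, in (ii) you should verify that the restrictions of $\omega^1,\dots,\omega^d$ form a \emph{basis} of $(S')_{\mmU\setminus\mmU'}^\perp$ for the reduced network, not merely an independent subset; the paper deduces this from the full rank of the reduced system, and without it Theorem \ref{thm:elim} does not formally apply to the reduced data.
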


The proof of Theorem~\ref{thm:elim2} is given in Section~\ref{sec:proofs}.

\begin{example}\label{ex:intersect}
Consider the following reaction network with mass-action kinetics,
\begin{align*}
U_1\ce{<=>[k_1][k_2]}& U_2&  U_3\ce{->[k_3]}&U_1+U_4 
\end{align*}
and the reactant-noninteracting set $\mmU=\{U_1,\dots,U_4\}$. We have $S_\mmU^\perp=\langle (1,1,1,0),$ $(0,0,1,1)\rangle$, so $S_\mmU^\perp$ does not admit a  nonnegative basis with  disjoint supports. In the notation of Theorem \ref{thm:elim2}, $\mmU'=\{U_3\}$ and system \eqref{eq:linelimsyst} fulfills $u_3=0$. In order to find $u_1,u_2,u_4$ we consider the multidigraph $\mmGu$, and the submultidigraph induced by $\mmU''=\{U_1,U_2,U_4,*\}$ is obtained by removing the dashed edges and $U_3$:

\medskip
\begin{center}
\begin{tikzpicture}[inner sep=1.2pt]
\node (Gu) at (-4.3,0.7) {$\mmGu$:};
\node (v1) at (-1,0) {$U_1$};
\node (v2) at (-3,0) {$U_2$};
\node (v3) at (1,0) {$U_3$};

\node (v4) at (5,0) {$U_4$};
 
\node (*) at (3,0) {$*$};

\draw[->] (v1) to[out=170,in=10] node[above,sloped]{\footnotesize $k_1$}(v2);
\draw[->] (v2) to[out=-10,in=190] node[below, sloped]{\footnotesize $k_2$}(v1);

\draw[->,dashed] (v3) to[out=180,in=0] node[above,sloped]{\footnotesize $k_3$}(v1);
\draw[->,dashed] (v3) to[out=30,in=150] node[above, sloped]{\footnotesize $k_3$}(v4);
\draw[->,dashed] (v3) to[out=0,in=180] node[above, sloped]{\footnotesize $-k_3$}(*);

\end{tikzpicture}
\end{center}
By  Theorem \ref{thm:elim}(ii) applied to the multidigraph $\mmGu|_{\mmU'}$ and the partition $\mmU_0''=\emptyset$, $\mmU''_1 = \{U_1,U_2\}$ and $\mmU_2''=\{U_4\}$, we have
\begin{equation*}\label{eq:exzero} 
u_1=\frac {T_1k_2}{k_1+k_2},\qquad u_2=\frac {T_1k_1}{k_1+k_2},\qquad u_4=T_2.
\end{equation*}
\end{example}

\subsection{Graphical tools for reactant-noninteracting sets and Theorem \ref{thm:elim}}\label{ssec:graphtool}

In this section we define two graphs that are useful in the application of Theorem \ref{thm:elim}. In particular, 
Theorem \ref{thm:positiveRG} provides a fast way to determine whether the solution to the elimination system is nonnegative.

The first graph we introduce is to select sets of reactant-noninteracting of species. Define the \emph{interaction graph} of a reaction network   on $\mathcal{S}$ to be the labeled undirected graph with node set $\mmS$, and such that  there is an edge connecting $S_i$ and $S_j$ for $i\neq j$ if they interact, and a self-edge  for $S_i$ if it is self-interacting. The edge is solid if the end points (self-)interact in at least one reactant and dotted otherwise.
Then a set $\mmU$ is reactant-noninteracting if and only if the subgraph induced by $\mmU$ has no solid edges, and it is  noninteracting if it has no edges at all. From this construction we easily see that if $\mathcal{U}_1$ and $\mathcal{U}_2$ are reactant-noninteracting (noninteracting) sets, then $\mathcal{U}_1\cup \mathcal{U}_2$ is not necessarily reactant-noninteracting (noninteracting). Hence, there might be several reactant-noninteracting  (noninteracting) sets that are maximal in the sense that they cannot be extended further.

We introduce now a second graph, to inspect the condition for nonnegativity in Theorem \ref{thm:elim}(iii). 
We consider the \emph{reaction-coefficient multidigraph}   to be the labeled multidigraph with node set  $\mmU_0$ and edge set given by 
\[
U_i\ce{->[(r,\ (y'_r)_j)]}U_j\qquad  \text{if}\quad U_i\in\supp(y_r),\,\,\, U_j \in\supp(y'_r), \,\,\, U_i,U_j\in\mmU_0,\,\,\,r\in\mmR.   \]
Two edges corresponding to the same reaction have the same source node.  Noting that two species are in a cycle in the graph if and only if they ultimately produce each other via $\mmU_0$, then Theorem \ref{thm:elim}(iii) is equivalent to the following result (by Theorem \ref{thm:elim2}, this statement does not require the basis of $S_\mmU^\perp$ to have disjoint support).

\begin{theorem}\label{thm:positiveRG}
Let $\mmU$ be a reactant-noninteracting set, $\k$ a $\mmU$-linear kinetics defined on $\R^m_{\geq 0}\times \Omega$, and $(u_0,x)\in\Rnn^m\times\Omega$.
Assume that $S_\mmU^\perp$ has a nonnegative basis.
Furthermore, assume that for each $r\in\Lambda_{\mmU_0}$, at most one edge corresponding to $r$ is in a cycle in the reaction-coefficient multidigraph,  and, if such an edge exists, then the coefficient in its label is one. 
Then the solution to the elimination system \eqref{eq:linelimsyst} is nonnegative.
\end{theorem}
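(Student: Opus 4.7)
The approach is to reduce Theorem \ref{thm:positiveRG} to Theorem \ref{thm:elim2}(iii). Two things must be established: (a) from the assumed nonnegative basis of $S_\mmU^\perp$, one can extract a \emph{minimal} nonnegative basis as required by Theorem \ref{thm:elim2}; and (b) the hypothesis of Theorem \ref{thm:positiveRG} about the reaction-coefficient multidigraph is equivalent to the hypothesis of Theorem \ref{thm:elim}(iii) invoked by Theorem \ref{thm:elim2}(iii).

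For (a), starting from any nonnegative basis $\{\omega^1,\ldots,\omega^d\}$ of $S_\mmU^\perp$, if it fails to be minimal one replaces it by a nonnegative basis with strictly smaller supports (in the sense of the definition preceding Theorem \ref{thm:elim2}). Since the supports are subsets of the finite set $\mmU$, the multiset of supports cannot shrink indefinitely, so finitely many replacements produce a minimal nonnegative basis. Note that the partition element $\mmU_0$ is intrinsic to $S_\mmU^\perp$ (it equals $\{U_i\in\mmU\st \omega_i=0 \text{ for all }\omega\in S_\mmU^\perp\}$), hence independent of the chosen basis; in particular $\Lambda_{\mmU_0}$ and the reaction-coefficient multidigraph are unaffected by the replacement.

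For (b), fix $r\in\Lambda_{\mmU_0}$. By the definition of $\Lambda_\mmU$, the reactant of $r$ contains exactly one species in $\mmU$, and by the discussion following \eqref{eq:defUi} (an application of Lemma \ref{lemma:cons_laws}), reactions in $\Lambda_{\mmU_0}$ involve no species in any $\mmU_i$ with $i>0$. Therefore the unique reactant species of $r$ in $\mmU$ lies in $\mmU_0$; call it $U_j$, and all product species of $r$ in $\mmU$ also lie in $\mmU_0$. By the definition of the reaction-coefficient multidigraph, the edges corresponding to $r$ are precisely $U_j\to U_i$ for $U_i\in\supp(y'_r)\cap\mmU_0$, carrying label $(r,(y'_r)_i)$. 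Such an edge lies on a cycle if and only if there is a simple directed path in the multidigraph from $U_i$ back to $U_j$. Reading the edges of this path as a chain of reactions yields distinct species $U_i=U_{i_1},\dots,U_{i_\ell}=U_j$ in $\mmU_0$ with $U_{i_k}$ in the reactant and $U_{i_{k+1}}$ in the product of a common reaction, which is exactly the definition of $U_i$ ultimately producing $U_j$ via $\mmU_0$. The coefficient in the label of the edge $U_j\to U_i$ is $(y'_r)_i$, so the two numerical conditions match. Hence the condition of Theorem \ref{thm:positiveRG} is equivalent to the condition of Theorem \ref{thm:elim}(iii) imposed on $\mmU$.

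Combining (a) and (b), Theorem \ref{thm:elim2}(iii) applies and yields the nonnegativity of the solution to \eqref{eq:linelimsyst}. The main obstacle is not technical depth but the bookkeeping in (b): one must carefully confirm that every reaction in $\Lambda_{\mmU_0}$ is confined to $\mmU_0$ (so that the reaction-coefficient multidigraph captures all relevant edges), and that a cycle through the edge $U_j\to U_i$ translates faithfully to the chain-style ``ultimately produces via $\mmU_0$'' relation used in Theorem \ref{thm:elim}(iii).
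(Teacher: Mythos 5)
Your proposal is correct and follows essentially the same route as the paper, which presents Theorem \ref{thm:positiveRG} as an immediate restatement of Theorem \ref{thm:elim}(iii) via the observation that ``ultimately produces via $\mmU_0$'' corresponds exactly to directed paths (hence cycles through the relevant edges) in the reaction-coefficient multidigraph, with Theorem \ref{thm:elim2} invoked to drop the disjoint-support requirement. You merely make explicit two points the paper leaves implicit --- the extraction of a minimal nonnegative basis and the fact that reactions in $\Lambda_{\mmU_0}$ are confined to $\mmU_0$ --- both of which you handle correctly.
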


Examples of these two types of graphs are given in the next section.

\section{Examples}\label{sec:examples}
In some cases, it is possible to  reduce the task of finding  the positive solutions to the steady state equations in each stoichiometric compatibility class by  linear elimination, to the task of solving a polynomial in one variable, whose coefficients depend on $\k$ and $T$, and checking for positivity of the solutions.
This situation occurs for example in the hybrid histidine kinase model studied in \cite{feliu:unlimited}, and the allosteric kinase model analyzed in \cite{FSWFS16}. In these examples, the polynomial is found by first eliminating the concentrations of a set of reactant-noninteracting species with $n-1$ species for which Theorem \ref{thm:elim}(iii) holds. 

Linear elimination can be used to obtain nonnegative/positive parameterizations of the set of steady states. In this scenario, we aim at eliminating a   set of reactant-noninteracting species $\mmU$ of cardinality the dimension of $S$ and such that $S_\mmU^\perp=\{0\}$.  
To illustrate this, we consider a simplified phosphorelay model of the sporulation network given in \cite[Supplementary Information]{Igoshin-PNAS}. Specifically, the network consists of three main proteins KinA, Spo0F and Rap. KinA and Spo0F exist in phosphorylated and unphosphorylated form, the former indicated by the subindex $p$. Rap is a phosphatase for the dephosphorylation of Spo0F$_p$. The phosphorylated form of KinA transfers the phosphate group to Spo0F.   The reaction network consists of the following reactions with mass-action kinetics:
\begin{align*}
\textrm{KinA} & \ce{<=>[k_1][k_2]}   \textrm{KinA}_p  & 
\textrm{KinA}_p  + \textrm{Spo0F} & \ce{<=>[k_3][k_4]}  Y_{1 }\ce{->[k_5]}   \textrm{KinA}  + \textrm{Spo0F}_p  \\
 \textrm{KinA}  + \textrm{Spo0F}     & \ce{<=>[k_6][k_7]}  Y_{2} &
 \textrm{Spo0F}_p + \textrm{Rap}    & \ce{<=>[k_8][k_{9}]}  Y_{3} \ce{->[k_{10}]} \textrm{Spo0F} + \textrm{Rap}.  
\end{align*}
Denoting KinA, KinA$_p$, Spo0F, Spo0F$_p$, Rap by $X_1,X_2,X_3,X_4,X_5$, respectively, the
network has conservation laws $x_1+x_2+y_1+y_2=T_1$, $x_3+x_4+y_1+y_3=T_2$ and $x_5+y_3=T_3$.
The \emph{interaction graph} of this network is

\begin{center}
\begin{tikzpicture}
\node (X2) at (0,0) {$X_2$};
\node (X3) at (2,0) {$X_3$};
\node (X5) at (4,0) {$X_5$};
\node (Y1) at (5.5,0) {$Y_1$};
\node (Y2) at (6.5,0) {$Y_2$};
\node (Y3) at (7.5,0) {$Y_3$.};
\node (X1) at (2,-1) {$X_1$};
\node (X4) at (4,-1) {$X_4$};

\draw[-] (X2) -- (X3);
\draw[-,dotted] (X1) -- (X4);
\draw[-] (X1) -- (X3);
\draw[-] (X4) -- (X5);
\draw[-,dotted] (X3) -- (X5);

\end{tikzpicture}
\end{center}
The largest sets of reactant-noninteracting species are
$\{X_1,X_2,X_4,Y_1,Y_2,Y_3\}$ and $\{X_1,X_2,X_5,Y_1,Y_2,Y_3\}$, and for both of them there exists a nonnegative basis of $S_\mmU^\perp$ with disjoint supports. The second set is noninteracting and hence fulfills Theorem \ref{thm:elim}(iii). For the first set we have $\mmU_1=\{X_1,X_2,Y_1,Y_2\}$ and $\mmU_0=\{X_4,Y_3\}$. Since $\Lambda_{\mmU_0} = \emptyset$, Theorem \ref{thm:elim}(iii) also applies. 

In order to obtain a parameterization of the steady state variety, we consider  the reactant-noninteracting set
$\mmU=\{X_1,X_2,X_4,Y_1,Y_3\}$. Now $\mmU_0=\mmU$ and $\Lambda_{\mmU_0}= \{ r_5\colon Y_1 \rightarrow X_1+X_4\}$. The \emph{reaction-coefficient multidigraph}   is

\begin{center}
\begin{tikzpicture}
\node (X2) at (-2,0) {$X_2$};
\node (X1) at (2,0) {$X_1$};
\node (Y3) at (4,0) {$Y_3$};
\node (Y1) at (2,-2) {$Y_1$};
\node (X4) at (4,-2) {$X_4$.};

\draw[->] (X2) to[out=7,in=173] node[above,sloped]{{\footnotesize $(r_2,1)$}} (X1);
\draw[->] (X1) to[out=187,in=-7] node[below,sloped]{{\footnotesize $(r_1,1)$}} (X2);

\draw[->] (X4) to[out=80,in=280] node[below,sloped]{{\footnotesize $(r_8,1)$}} (Y3);
\draw[->] (Y3) to[out=260,in=100] node[above,sloped]{{\footnotesize $(r_9,1)$}} (X4);

\draw[->] (X2) to node[above,sloped,near end]{{\footnotesize $(r_3,1)$}} (Y1);
\draw[->] (Y1) to[out=170,in=-40] node[below,sloped]{{\footnotesize $(r_4,1)$}} (X2);

\draw[->] (Y1) to node[above,sloped]{{\footnotesize $(r_5,1)$}} (X1);
\draw[->] (Y1) to node[above,sloped]{{\footnotesize $(r_5,1)$}} (X4);

\end{tikzpicture}
\end{center}
There are two edges corresponding to $r_5\in\Lambda_{\mmU_0}$.
The edge $Y_1\rightarrow X_1$ is in a cycle of the multidigraph, while the edge $Y_1\rightarrow X_4$ is not. Thus Theorem \ref{thm:positiveRG} applies and the concentration of the species in $\mmU$ can be positively expressed in terms of $x_3,x_5,y_2$. This gives a parameterization of the steady state manifold.
 In order to find the explicit expression, one can find  $\mmGu$ and apply Theorem \ref{thm:elim}(ii), or use mathematical software to solve the linear system (which  is often the fastest and most convenient option).
This analysis carries over  the complete network given in \cite{Igoshin-PNAS}, which includes phosphotransfer reactions to Spo0B and Spo0A. 

  \smallskip
We conclude with another example for which  Theorem \ref{thm:elim}(iii) fails, and in fact, the solution to the elimination system \eqref{eq:linelimsyst} is not necessarily  nonnegative. We consider the reaction network studied in \cite{kremling} for the KdpD/KdpE two-component  system in \emph{Escherichia coli} and assume mass-action kinetics:
\begin{align*}
\textrm{KdpD} & \ce{<=>[k_1][k_7]} \textrm{KdpD}_p  \qquad \qquad  2\textrm{KdpE}_p + \textrm{DNA}   \ce{<=>[k_5][k_6]} Y \\  \textrm{KdpD}_p + \textrm{KdpE} & \ce{<=>[k_2][k_3]} \textrm{KdpD}+ \textrm{KdpE}_p  \ce{->[k_4]}  \textrm{KdpD} + \textrm{KdpE}.
\end{align*}
Here KdpD and KdpE are the two components of the system, which either are phosphorylated or not.  Denote KdpD,  KdpD$_p$, KdpE,  KdpE$_p$, DNA and $Y$ by $X_1,\dots,X_6$, respectively.
The \emph{interaction graph}  is
\begin{center}
\begin{tikzpicture}
\node (X1) at (0,0) {$X_1$};
\node (X4) at (2,0) {$X_4$};
\node (X5) at (4,0) {$X_5$};
\node (X3) at (0,-1) {$X_3$};
\node (X2) at (2,-1) {$X_2$};
\node (Y) at (4,-1) {$X_6$.};

\draw[-] (X4) .. controls (2.5,-0.8) and (1.5,-0.8).. (X4);
\draw[-] (X1) -- (X4);
\draw[-,dotted] (X1) -- (X3);
\draw[-] (X2) -- (X3);
\draw[-] (X4) -- (X5);

\end{tikzpicture}
\end{center}
The set $\mmU=\{X_1,X_3,X_5,X_6\}$ is reactant-noninteracting with $\mmU_1=\{X_5,X_6\}$ and $\mmU_0=\{X_1,X_3\}$. Therefore, $\Lambda_{\mmU_0}=\{r_4\colon X_1+X_4\to X_1+X_3\}$. 
The graph $\mmGu$ has two  connected components, one has node set $\mmU_1$, while the other is

\begin{center}
\begin{tikzpicture}[inner sep=1.2pt]

\node (X1) at (0,0) {$X_1$};
\node (X3) at (3,0) {$X_3$.};
\node (*) at (-3,0) {$*$};

\draw[->] (X1) to[out=30,in=150] node[above,sloped]{\footnotesize $k_3x_4$}(X3);
\draw[->] (X1) to node[above,sloped]{\footnotesize $k_4x_4$}(X3);
\draw[->] (X3) to[out=200,in=-20] node[below,sloped]{\footnotesize $k_2x_2$}(X1);

\draw[->] (X1) to node[above]{\footnotesize \quad $k_1$}(*); 
\draw[->] (X1) to[out=150,in=30] node[above]{ \footnotesize $-k_4x_4$}(*); 
\draw[->] (*) to[out=-20,in=200] node[below]{\footnotesize $k_7$ }(X1); 
\end{tikzpicture}
\end{center}
The species $X_1\in\mmU$ is in the reactant of $r_4$.
 Both $X_1$ and $X_3$ ultimately produce $X_1$ via $\mmU_0$. Indeed, the reaction $r_4$ itself implies that $X_1$ ultimately produces $X_1$, and the reaction $r_3$ gives that $X_1$ ultimately produces $X_3$. So  Theorem \ref{thm:elim}(iii) cannot be used to conclude nonnegativity of the solution. 
In fact, using Theorem \ref{thm:elim}(ii) with $C_\mmU=\emptyset$ gives
\[x_1 = \frac{\Upsilon_{\mmG_0}(X_1)}{\Upsilon_{\mmG_0}(*)} = \frac{k_2k_7x_2 }{k_2x_2(k_1-k_4x_4)} = \frac{k_7 }{k_1-k_4x_4}.\]
If $x_4>k_1/k_4$, then this solution is negative.


\section{Proof of  Theorem \ref{thm:elim} and Theorem \ref{thm:elim2}}\label{sec:proofs}
In this section we give proofs of the main results.  Throughout this section, we assume a reaction network $(\mmC,\mmR)$ on $\mmS$ is given with a reactant-noninteracting set of species $\mmU$ and a $\mmU$-linear kinetics $\k$. In addition, we assume  a nonnegative basis  $\{\omega^1,\dots,\omega^d\}$ of $S_\mmU^\perp$  with disjoint support is given.
The proof of Theorem \ref{thm:elim} builds on the general results given in \cite{Saez:PosSol}, where the solution to a specific, though general, type of linear square systems is analyzed. 

We start by making some  simplifications on the notation to ease the readability of the proofs. 
Firstly, we do not write explicitly the dependence of $A, b, v_r$ on $x, T_{u_0}$. Secondly, we identify $U_1,\dots,U_m$ with their indices, that is,  $\mmU$ is $\{1,\dots,m\}$ and  $\mmG_\mmU$ has node set $\{1,\dots,m,*\}$.
Furthermore, we denote the cardinalities of the sets $\mmU_i$ by $m_i$,
and assume $\mmU$ is ordered such that for $ i=1,\dots,d$,
\[ \mmU_i = \left\{ 1+ \sum_{j=1}^{i-1} m_j, \dots, \sum_{j=1}^{i} m_j\right\} \quad \textrm{and}\  \quad \mmU_0= \left\{ 1+  \sum_{j=1}^{d} m_j,\dots,m\right\}. \]

The solution to the elimination system \eqref{eq:linelimsyst} is independent of the order of the equations, as well as the choice of redundant equations $\dot{u}_j=0$ to remove, for each vector $\omega^i$. Therefore, we build the system such that the equation
\[ \rho(\omega^i) \cdot u - T_i=0\] 
replaces the equation $\dot{u}_{j_i}=0$ in the system $\dot{u}_1=0,\dots,\dot{u}_m=0$ (as in Example \ref{example:reactnet}).
Then, with the chosen order,  system \eqref{eq:linelimsyst} fulfills 
\begin{equation*} 
A=\left(
\begin{array}{ccccc}
A_1 & 0 &\cdots& 0 & 0 \\
0  & A_2  & \cdots & 0 & 0  \\
\vdots & \vdots & \ddots & \vdots & \vdots \\
0  &0& \cdots & A_d & 0 \\ \hline
\multicolumn{5}{c}{A_0}  \end{array}
\right)\in \R^{m\times m}, \qquad 
b= \left(\begin{array}{c}
b^1    \\
b^2\\ 
\vdots\\
 b^d  \\ \hline
b^0  \end{array}\right)\in \R^m,
\end{equation*}
with $A_0\in \R^{m_0\times m}$ and $b^0\in \R^{m_0}$, and for  $i=1,\dots,d$,
\begin{enumerate}[(i)]
\item $A_i\in \R^{m_i\times m_i}$.
\item  $b^i \in \R^{m_i}$ has  at most one nonzero entry  (and exactly one if $T_i\neq 0$).
\end{enumerate}
It follows that this system is of the type studied  in \cite[Section 4]{Saez:PosSol}, where the sets $\mmU_i$, $i=1,\dots,d$, are denoted $\mmN_i$, and $\mmN_0$ agrees with $\mmU_0\cup \{*\}$.

Some extra definitions are required to prove Theorem \ref{thm:elim}. We let $s,t \colon \mmEu \rightarrow  \{1,\dots,m,*\}$  denote the functions assigning each edge of $\mmG_\mmU$ to its source and target, respectively. 
For a multidigraph $\mmG=(\mmN,\mmE)$ and two sets $F,B\subseteq \mmN$ with the same cardinality $M$,  let $\Theta_\mmG(F,B)$ be the set of spanning forests of $\mmG$ such that each forest has $M$ connected components (trees), each tree is rooted at a node in $B$ and contains one node in $F$. 
For finite disjoint sets 
$W_1,\dots,W_k$, define the following set of subsets
\[
W_1\odot \cdots \odot W_k=\!\underset{i\in\{1,\dots,k\}}{\mathlarger{\mathlarger\odot}}W_i =
\Big\{\!\{w_{i_1},\dots,w_{i_k}\}\subseteq \bigcup\limits_{i=1}^kW_i\!\st\! w_{i_j}\in W_j\text{ for }j=1,\dots,k\Big\},
\]
that is, all sets in $W_1\odot \cdots \odot W_k$ have $k$ elements.
Let  $\mmU_{d+1}=\{m+1\}$ and define
\begin{align*}
F & =\{j_1,\dots,j_d,m+1\}, \\
\V  & = \mmU_ 1 \odot \cdots \odot  \mmU_{d+1}, \\
\V^i & =\mmU_1  \odot \cdots \odot  \mmU_{i-1} \odot \mmU_{i+1}  \odot \cdots \odot \mmU_{d+1}, \qquad i=1,\ldots,d+1.
\end{align*}
If $B \in \V$ or $B \in \V^k$ for some $k$, then the set $B\cap \mmU_i$, ($i\neq k$ in the latter case) consists of a single element denoted by $B_{(i)}$,
that is, \[ B\cap \mmU_{i}=\{B_{(i)} \}.\]

\medskip
\paragraph{Proof of Theorem \ref{thm:elim}(i) and (ii)} 
Denote by $A|b$  the matrix obtained by appending the column $b$ to $A$.
To prove Theorem \ref{thm:elim}(i)-(ii), we use   \cite[Proposition 3]{Saez:PosSol}, which says the following. If $\mmG_\mmU$ fulfills 
\begin{itemize}
\item[(A1)] for every $i>0$, any edge with target in $\mmU_i$ has source also  in $\mmU_i$,
\item[(A2)] the $\ell$-th row of the Laplacian of $\mmG_\mmU$ agrees with the $\ell$-th row of $A|b$ for all $\ell \notin \{j_1,\dots,j_d,m+1\}$,
\end{itemize}
then the solution to \eqref{eq:linelimsyst} is
\begin{align}\label{eq:solution2}
u_\ell &=\frac{\sum\limits_{k=1}^{d+1} (-b_{j_k}) \sum\limits_{B\in \V^{k},\ell\notin B}\left(\prod\limits_{ i=1, i\neq k}^d a_{j_i B_{(i)}}\right) \Upsilon_{\mmGu}(F,B\cup \{\ell\})   }{\sum\limits_{B\in \V}\left(\prod\limits_{i=1}^d a_{j_i B_{(i)}}\right)\Upsilon_{\mmGu}(F,B)},
\end{align}
with $-b_{j_{d+1}}=-1$ \cite{Saez:PosSol}.
Further, the denominator of this expression is $(-1)^{m-d}\det(A)$. 

Therefore, the strategy to prove  Theorem \ref{thm:elim}(i)-(ii) is first to show that (A1) and (A2) hold, and then use the specific structure of $\mmG_\mmU$ to simplify the terms in \eqref{eq:solution2} to get the expressions of  Theorem \ref{thm:elim}. Note that (A1) and (A2) imply, in the terminology of \cite{Saez:PosSol}, that the multidigraph $\mmG_\mmU$ is $A$-compatible, which is a requirement to apply \cite[Proposition 3]{Saez:PosSol}.

To show that (A2) holds, let  $\widetilde{A}x+ \widetilde{b}=0$ be  as in  \eqref{eq:sseqU}.
Let $\mmE_{ji}$ be the set of parallel edges with source $j$ and target $i$ and let $\pi$ denote the labeling function of $\mmGu$.  The  Laplacian of $\mmG_\mmU$ is by definition the $(m+1)\times (m+1)$ matrix $L=(L_{ij})$ given entry-wise as follows. 
For $i,j<m+1$ and $i\neq j$, we have
\begin{align*}
L_{ij} & = \sum\limits_{e\in \mmE_{ji}}\pi(e)  = \sum_{r\in \mmR_\mmU, (y_r)_j=1}
(y_r')_i v_r=  \sum_{r\in \mmR, (y_r)_j=1} v_r (y_r'-y_r)_i = \widetilde{a}_{ij},
\end{align*}
where we have used that  if $(y_r)_j=1$, then $(y_r)_i=0$, because $\mmU$  is reactant-noninteracting. 
Next we find, for $i,j<m+1$,
\begin{align*}
L_{i,m+1} &= \sum\limits_{e\in \mmE_{m+1,i}}\pi(e)  = \sum_{r\in \mmR_\mmU, \rho(y_r)=0}
(y_r')_i v_r = \widetilde{b}_{i}, \\
L_{m+1,j} &= \sum\limits_{e\in \mmE_{j,m+1}}\pi(e)  = \sum_{r\in \mmR_\mmU, (y_r)_j=1, \rho(y'_r)=0}
 v_r +   \sum_{r\in \Lambda_\mmU, (y_r)_j=1} -\lambda_r v_r
\end{align*}
(where $\lambda_r$ is given in Definition \ref{def:graphU}).
Finally, we study  $L_{jj}$ for $j\leq m$, defined as $-\sum_{j\neq i}L_{ij}$. 
If $r\in \mmR_\mmU\setminus \Lambda_\mmU$ fulfills $ (y_r)_j=1$, then for at most one index $i\neq j$ we have $(y_r')_i=1$, and if so $(y_r')_i v_r=(y_r)_j v_r$.   Furthermore,  $(y_r'-y_r)_j\neq 0$ if and only if either $r$ belongs to $\Lambda_\mmU$ or $(y_r')_j=0$.
This gives:
\begin{align*}
L_{jj} &=  - \sum_{\substack{r\in \mmR_\mmU\\ (y_r)_j=1}}\sum\limits_{i=1, i\neq j}^m  (y_r')_i v_r    
-  \sum_{\substack{r\in \mmR_\mmU\\ (y_r)_j=1,\rho(y'_r)=0}}
 v_r +   \sum_{\substack{r\in \Lambda_\mmU\\ (y_r)_j=1}} \left( -v_r + \sum_{i=1}^m   (y_r')_i v_r \right) \\ 
& = -  \sum_{\substack{r\in \mmR_\mmU\setminus \Lambda_\mmU,  (y_r')_j=0 \\  (y_r)_j=1,\rho(y_r')\neq 0}} (y_r)_j v_r  
-  \sum_{\substack{r\in \mmR_\mmU\\  (y_r)_j=1, \rho(y'_r)=0}}
  (y_r)_j v_r +    \sum_{\substack{r\in \Lambda_\mmU \\  (y_r)_j=1}} - (y_r)_j  v_r +  (y_r')_j v_r  \\
&= \sum_{r\in \mmR_\mmU, (y_r)_j=1} (y_r'-y_r)_j v_r = \widetilde{a}_{jj}.
\end{align*}
 
Consequently, the first $m$ rows of $L$ agree with the matrix $\widetilde{A}|\widetilde{b}$, which in turn agrees with the matrix $A|b$, except for the rows $j_1,\dots,j_d$. Thus (A2) holds.

 In order to prove (A1) and later Theorem \ref{thm:elim}(iii) below, we state a general lemma.

\begin{lemma}\label{lemma:facts} The multidigraph $\mmGu$ fulfills:
\begin{enumerate}[(i)]
\item \label{F1} Every edge with target in  $\mmU_i$ for $i>0$ has source also in $\mmU_i$.
\item \label{F2} Every edge from $\mmU_i$ for $i\neq 0$ to $\mmU_0\cup \{*\}$  corresponds to a reaction in $\Lambda_\mmU$ whose product has at least one species in $\mmU_i$.
\item \label{F3} Let $\zeta$ be a spanning forest of $\mmGu$ and $\tau$ a connected component of $\zeta$. If $\tau$ contains a node in $\mmU_i$, $i\neq 0$, then its root is either in $\mmU_i$ or in $\mmU_0\cup \{*\}$. If $\tau$ contains a node in $\mmU_0\cup \{*\}$, then its root is also in $\mmU_0\cup \{*\}$.
\end{enumerate}
\end{lemma}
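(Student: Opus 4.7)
The plan is to prove all three items by a direct case analysis over the four edge-types listed in Definition \ref{def:graphU}, combined with Lemma \ref{lemma:cons_laws} applied to each basis vector $\omega^i$ of $S_\mmU^\perp$ (each nonnegative with support exactly $\mmU_i$), the reactant-noninteracting property of $\mmU$, and the pairwise disjointness of the $\mmU_i$, $i\ge 0$.

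For (i), I will enumerate the edge-types whose target can be a node $U_\ell\in\mmU_i$, $i>0$: they are either $U_j\to U_\ell$ arising from a reaction $r$ with $(y_r)_j=1$ and $(y'_r)_\ell\ne 0$, or $*\to U_\ell$ arising from a reaction with $\rho(y_r)=0$. In both cases Lemma \ref{lemma:cons_laws} applied to $\omega^i$ forces the reactant of $r$ to intersect $\mmU_i$. This immediately rules out the second type (since $\rho(y_r)=0$ says no species of $\mmU$ is in the reactant at all), and in the first it forces the unique species of $\mmU$ in the reactant, namely $U_j$, to lie in $\mmU_i$.

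For (ii), the edges out of $U_j\in\mmU_i$ ($i>0$) with target in $\mmU_0\cup\{*\}$ fall into three subcases: (a) edges $U_j\to U_\ell$ with $U_\ell\in\mmU_0$; (b) edges $U_j\to *$ in $\mmEu^+$ arising from a reaction with $\rho(y'_r)=0$; and (c) edges $U_j\to *$ in $\mmEu^-$. Case (b) will be ruled out at once by Lemma \ref{lemma:cons_laws}, since $U_j\in\mmU_i$ is in the reactant while nothing in $\mmU_i$ is in the product. Case (c) is automatic from the definition of $\Lambda_\mmU$, and the lemma ensures some species of $\mmU_i$ lies in the product. The main obstacle is case (a): Lemma \ref{lemma:cons_laws} supplies a species of $\mmU_i$ in the product, and together with $U_\ell\in\mmU_0$ this exhibits two distinct species of $\mmU$ in the product of $r$; combined with $(y_r)_j=1$ and reactant-noninteraction (which forces $U_j$ to be the only species of $\mmU$ in the reactant), we conclude $r\in\Lambda_\mmU$.

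For (iii), I will combine (i) with the disjointness of the supports. An edge with source in $\mmU_i$ and target in $\mmU_j$ for $0<j\ne i$ would, by (i), have source in $\mmU_j$, contradicting $\mmU_i\cap\mmU_j=\emptyset$; hence every edge out of $\mmU_i$ lands in $\mmU_i$ or in $\mmU_0\cup\{*\}$. Likewise (i) prohibits any edge from $\mmU_0\cup\{*\}$ reaching some $\mmU_j$ with $j>0$, so edges out of $\mmU_0\cup\{*\}$ remain in $\mmU_0\cup\{*\}$. Following the unique directed path in $\tau$ from a prescribed node to its root then gives both assertions: once the path enters $\mmU_0\cup\{*\}$ it cannot leave, while a path beginning in $\mmU_i$ either stays inside $\mmU_i$ for its whole length or transitions once into $\mmU_0\cup\{*\}$. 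Apart from the reactant-noninteracting accounting in case (a) of (ii), no delicate points remain.
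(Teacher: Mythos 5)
Your proposal is correct and follows essentially the same route as the paper: both arguments rest on applying Lemma \ref{lemma:cons_laws} with $\mmH=\mmU_i$ to the reactions behind each edge type of Definition \ref{def:graphU}, using reactant-noninteraction and the disjointness of the supports, and both deduce (iii) from (i) by observing that edges in $\mmEu^-$ all point to $*$ and then following the directed path to the root. Your write-up is simply a more explicit, case-by-case version of the paper's terse proof; no gaps.
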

\begin{proof}
Let $e\in\mmEu^+$ with target in $\mmU_i$ and $r\in \mmR_\mmU$ the associated reaction. By Lemma \ref{lemma:cons_laws} with $\mmH=\mmU_i$, $s(e)\in\mmU_i$. Since the edges in $\mmEu^-$ have target node $*$, statement (\ref{F1}) and (\ref{F3}) follow. 
For (ii), if $s(e)\in\mmU_i$, then by Lemma \ref{lemma:cons_laws} the associated reaction $r$ has at least one product in $\mmU_i$.This guarantees that $r$ defines another edge with target in  $\mmU_i$. Hence $r$ must be in $\Lambda_\mmU$ and statement (\ref{F2})  follows.
\end{proof}

By Lemma \ref{lemma:facts}(i), (A1) holds.   Thus the solution to the elimination system \eqref{eq:linelimsyst} is as given in \eqref{eq:solution2}. 
The next lemma is useful for  simplifying \eqref{eq:solution2}.

\begin{lemma}\label{lemma:sepcomp}
It holds that  
\begin{gather} 
\Upsilon_{\mmGu}(F,B)=\Upsilon_{\mmG_0}(*)\prod_{i=1}^d \Upsilon^*_{\mmG_{\mmU_i}}(j_i, B_{(i)}),  \qquad B\in \V, \label{eq:sepcompA} \\
\Upsilon_{\mmGu}(F,B \cup \{\ell\})= \Upsilon_{\mmG_0}(\ell)\prod_{i=1}^d\Upsilon^*_{\mmG_{\mmU_i}}(j_i, B_{(i)}), \qquad  B\in \V^{d+1},\ \ell>j_d,\label{eq:sepcompB} \\
\Upsilon_{\mmGu}(F,B\cup \{\ell\})=\Upsilon^*_{\mmG_{0,k}}(j_k,\ell)\prod^d_{\begin{subarray}{c}i=1\\ i\neq k\end{subarray}}\Upsilon^*_{\mmG_{\mmU_i}}(j_i, B_{(i)}), \quad  B\in \V^k,\  k\leq d, \ j_d<\ell\leq m.\label{eq:sepcompC}
\end{gather}
\end{lemma}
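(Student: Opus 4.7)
The three identities \eqref{eq:sepcompA}--\eqref{eq:sepcompC} are combinatorial factorization statements for weighted spanning-forest sums of $\mmGu$. My plan is to prove them by an explicit, label-preserving bijection that decomposes each spanning forest of $\mmGu$ appearing on the left-hand side into pieces supported on the sub-multidigraphs $\mmG_0$, $\mmG_{\mmU_i}$, and (in \eqref{eq:sepcompC}) $\mmG_{0,k}$.

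The starting ingredient is Lemma \ref{lemma:facts}. Take $\zeta\in\Theta_{\mmGu}(F,B)$ with components $\tau_1,\dots,\tau_d,\tau_0$, where $\tau_i$ is the tree rooted at $B_{(i)}\in\mmU_i$ containing $j_i$ (for $i>0$) and $\tau_0$ is the tree rooted at $*$. By part~(iii), each $\tau_i$ with $i>0$ lies entirely in $\mmU_i$, so the remaining $\mmU_i$-nodes belong to $\tau_0$; by part~(i), once a directed path in $\tau_0$ leaves $\mmU_i$ it never returns. I would therefore split $\zeta$ into the restriction $\sigma_0$ of $\tau_0$ to $\mmU_0\cup\{*\}$ (a spanning tree of $\mmG_0$ rooted at $*$) and, for each $i>0$, a spanning forest $\sigma_i$ of $\mmG_{\mmU_i}$ consisting of $\tau_i$, the $\mmU_i$-internal edges of $\tau_0$, and ``virtual'' $\to *$ edges representing the exit edges of $\tau_0$ that leave $\mmU_i$. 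The inverse construction re-expands the virtual edges; Lemma \ref{lemma:cons_laws} applied to $\omega^i$ ensures every exit-producing reaction has reactant in $\mmU_i$ and products confined to $\mmU_i\cup\mmU_0$, pinning down the reassembly uniquely.

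The main obstacle, and the reason the paper warns that $\mmG_{\mmU_i}$ need not equal $\mmGu|_{\mmU_i\cup\{*\}}$, is the matching of labels on exit edges. A reaction $r\in\Lambda_\mmU$ with reactant in $\mmU_i$ and products split between $\mmU_i$ and $\mmU_0$ contributes to $\mmGu$ both a signed $\mmEu^-$ edge to $*$ with coefficient $-\lambda_r=-(\sum_{U_k\in\mmU}(y_r')_k-1)$ and first-subset edges into each product node, while in $\mmG_{\mmU_i}$ the same reaction contributes only first-subset edges to $\mmU_i$-nodes and, at most, an $\mmEu^-$ edge with the strictly smaller coefficient $-(\sum_{U_k\in\mmU_i}(y_r')_k-1)$, which vanishes unless $r\in\Lambda_{\mmU_i}$. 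The identity $\lambda_r=(\sum_{U_k\in\mmU_i}(y_r')_k-1)+\sum_{U_k\in\mmU_0}(y_r')_k$, valid under Lemma \ref{lemma:cons_laws} because reactions with reactant in $\mmU_i$ have no product in $\mmU_j$ for $j\neq 0,i$, is exactly what is needed to check that summing the forest contributions over all ways an exit edge can be assigned (to a concrete $\mmU_0$-node of $\tau_0$ in $\mmGu$ versus to $*$ in $\mmG_{\mmU_i}$ with a compensating contribution from $\mmG_0$) produces identical total weights on both sides of \eqref{eq:sepcompA}.

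The remaining identities follow from the same bijection with only cosmetic changes. For \eqref{eq:sepcompB}, the tree containing $*$ is rooted at $\ell\in\mmU_0$ rather than at $*$, so $\sigma_0$ becomes a spanning tree of $\mmG_0$ rooted at $\ell$ and $\Upsilon_{\mmG_0}(*)$ is replaced by $\Upsilon_{\mmG_0}(\ell)$. For \eqref{eq:sepcompC}, the tree containing $j_k\in\mmU_k$ is rooted at $\ell\in\mmU_0$, so it necessarily straddles $\mmU_k$ and part of $\mmU_0$, forcing the $\mmG_0$- and $\mmG_{\mmU_k}$-pieces of the factorization to merge into the single factor $\Upsilon^*_{\mmG_{0,k}}(j_k,\ell)$, while the remaining $\Upsilon^*_{\mmG_{\mmU_i}}(j_i,B_{(i)})$ factors for $i\neq k$ are unchanged.
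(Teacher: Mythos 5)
Your proposal is correct and follows essentially the same route as the paper: decompose the spanning forests of $\mmGu$ along the partition $\mmU_0,\mmU_1,\dots,\mmU_d$ using Lemma~\ref{lemma:facts}, and reconcile the mismatched exit-edge labels via the identity $\lambda_r=\bigl(\sum_{U_k\in\mmU_i}(y'_r)_k-1\bigr)+\sum_{U_k\in\mmU_0}(y'_r)_k$ by summing over all possible targets of each exit edge. The paper formalizes your ``summing over all ways an exit edge can be assigned'' as a surjection onto the forests of a relabeled multidigraph $\mmGu^D$ with fiber-wise weight sums, so the correspondence is many-to-one rather than the bijection you initially announce, but the substance is identical.
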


\begin{proof}
For all three equalities, the term on the left-hand side depends on $\mmGu$, while the terms on the right-hand side depend on the multidigraphs $\mmG_{\mmU_i}$, $i>0$, as well as the submultidigraphs $\mmG_0$ and $\mmG_{0,k}$ of $\mmGu$. Since $\mmG_{\mmU_i}$ is not necessarily a submultidigraph of $\mmGu$, we start by comparing them. Consider   $\mmG_\mmU$,  and $\mmG_{\mmU_i}$, for $i>0$, with node sets  $\mmU\cup \{*\}$ and $\mmU_i\cup \{*\}$, respectively.  There is a natural label-preserving correspondence between the set of edges between nodes in $\mmU_i$ of the two multidigraphs. By Lemma \ref{lemma:facts}(\ref{F2}), any edge from a node in $\mmU_i$ to the node $*$ in  $\mmGu$ or $\mmG_{\mmU_i}$ 
corresponds to a reaction in $\Lambda_\mmU$, whose product has a species in $\mmU_i$.  Therefore, in neither multidigraph there are edges from $\mmU_i$ to $*$ with positive label. 

For $r\in \Lambda_\mmU$ such that the reactant has a species in $\mmU_i$, let $e$ be the corresponding edge of $\mmG_\mmU$ from a node in $\mmU_i$ to $*$. Then $e$ is also an edge of  $\mmG_{\mmU_i}$ if and only if  $r\in\Lambda_{\mmU_i}$, that is, the product of $r$ has at least two species in $\mmU_i$, or a self-interacting species in $\mmU_i$. 
In this case, the label of the edge  is $-\lambda_r v_r(x)$ in $\mmGu$ and 
 $-\lambda^i_r v_r(x)$ in $\mmG_{\mmU_i}$ with 
\[ \lambda_r^i=  \sum_{j \st U_j \in \mmU_i} (y_r')_j -1.\]
This is the only difference between the multidigraph $\mmG_{\mmU_i}$ and the submultidigraph $\mmG_\mmU|_{\mmU_i\cup \{*\}}$ of $\mmG_\mmU$ induced by $\mmU_i\cup \{*\}$. 

Consider now the following cases: 
\begin{align*}
\text{(P1)}\qquad  &  \widetilde{B} =B,  && B \in \V, \\
\text{(P2)}\qquad  &  \widetilde{B}= B\cup \{\ell\},&&  B\in \V^{d+1}, \ \ell>j_d,\\
\text{(P3)}\qquad  &  \widetilde{B}= B\cup \{\ell\}, && B\in \V^k,\ k\leq d, \ m\geq \ell>j_d.
\end{align*}
The terms on the left-hand side of \eqref{eq:sepcompA}-\eqref{eq:sepcompC} arise from the labels of the spanning forests in  $  \Theta_{\mmG_\mmU}(F,\widetilde{B})$. 
Let $D=\{1,\dots,d\}$ in  the case (P1) and (P2), and $D=\{1,\dots,d\}\setminus\{k\}$ in  the case (P3). Let $\mmG_\mmU^D$ be the multidigraph obtained from $\mmGu$ as follows. For each edge from a node in $\mmU_i$, $i\in D$, to the node $*$ in $\mmGu$, if $\lambda^i_r=0$, then the edge is removed, and if not, the label of this edge in $\mmGu^D$ is defined as $-\lambda^i_r v_r(x)$. Furthermore, remove from $\mmGu$ all edges from a node in $\mmU_i$,  $i\in D$, to a node in $\mmU_0$. The constraints on $\mmGu$ in Lemma \ref{lemma:facts} imply 
 that the expressions on the right-hand side of  \eqref{eq:sepcompA}-\eqref{eq:sepcompC} agree with
$ \Upsilon_{\mmG_\mmU^D}(F,\widetilde{B}).$ 
Therefore we need to show that 
\begin{equation}\label{eq:upsilonD}
\Upsilon_{\mmG_\mmU}(F,\widetilde{B}) =  \Upsilon_{\mmG_\mmU^D}(F,\widetilde{B}).
\end{equation}

Let $\zeta'\in  \Theta_{\mmG_\mmU}(F,\widetilde{B})$. Consider an edge $e\colon j\rightarrow j'$ with $j\in \mmU_i$ for some $i\in D$ and $j'\in \mmU_0$. By Lemma \ref{lemma:facts}(\ref{F2}), the reaction $r$ corresponding to this edge  belongs to $\Lambda_\mmU$ and hence, this reaction gives rise to an additional edge $j\rightarrow *$  in $\mmE_\mmU^-$. Replacing the edge $j \rightarrow j'$ of  $\zeta'$ by the edge $j\rightarrow *$  gives a new element $\zeta\in \Theta_{\mmG_\mmU}(F,\widetilde{B})$, since no cycle is created according to Lemma \ref{lemma:facts}(\ref{F1}) and there are no edges from $\mmU_0$ to $\mmU\setminus \mmU_0$.

Let $\Gamma_{\mmG_\mmU}(F,\widetilde{B})$ be the set of spanning forests $\zeta$
that do not have any edge with source in $\mmU_i$ for some $i\in D$ and target in $\mmU_0$. Consider the map $\gamma \colon \Theta_{\mmG_\mmU}(F,\widetilde{B}) \rightarrow \Gamma_{\mmG_\mmU}(F,\widetilde{B})$ 
that maps a spanning forest $\zeta'$ to the spanning forest obtained by replacing all edges with source in $\mmU_i$, $i\in D$, and target in $\mmU_0$, by the corresponding edges with target $*$ (in $\mmE_\mmU^-$), as explained above. In the   case (P3), no edge from a node in $\mmU_k$ to a node in $\mmU_0$ is  changed since $k\notin D$.
This map is surjective and gives the decomposition
\[  \Theta_{\mmG_\mmU}(F,\widetilde{B}) = \bigsqcup\limits_{\zeta\in \Gamma_{\mmG_\mmU}(F,\widetilde{B})} \gamma^{-1}(\zeta).\]
Therefore
\begin{align}\label{eq:upsiliondecomp}
\Upsilon_{\mmG_\mmU}(F,\widetilde{B}) &= \sum_{\zeta\in \Gamma_{\mmG_\mmU}(F,\widetilde{B}) }\ \sum_{\zeta'\in \gamma^{-1}(\zeta)} \pi(\zeta').
\end{align} 

Let  $\zeta\in \Gamma_{\mmG_\mmU}(F,\widetilde{B})$, and let $\alpha(\zeta)$ be the set of edges of $\zeta$ with source in $\mmU_i$, $i\in D$, and target $*$. For $e\in\alpha(\zeta)$, let $\beta_e$ be the union of $\{e\}$ with the set of edges in $\mmGu$ with source $s(e)$ and target in  $\mmU_0$, corresponding to the same reaction as $e$.   Then, there is a one-to-one correspondence 
 \[ \gamma^{-1}(\zeta)  \leftrightarrow \underset{e\in  \alpha(\zeta)}{\mathlarger{\mathlarger\odot}} \beta_e,\]
  since every spanning forest in $\gamma^{-1}(\zeta)$ is obtained by replacing edges of $\alpha(\zeta)$ with edges corresponding to the same reactions but with targets in $\mmU_0$.
Hence
\begin{equation}\label{eq:zetaprime2}
\sum_{\zeta'\in \gamma^{-1}(\zeta)} \pi(\zeta')   =   \left(   \prod_{e\in \zeta\setminus \alpha(\zeta) } \pi(e)  \right)
\left(   \prod_{e\in  \alpha(\zeta) }  \sum_{e'\in \beta_e} \pi(e') \right).  
\end{equation}
Furthermore, if $e$ corresponds to the reaction $r$ and $s(e)\in \mmU_i$, then
\begin{equation}\label{eq:zetaprime}
\sum_{e'\in \beta_e} \pi(e') = \pi(e) + \sum_{e'\in \beta_e | t(e')\in \mmU_0}\pi(e')= -\lambda_r v_r(x) + \sum_{ j\in \mmU_0} (y_r')_jv_r(x)  
= -\lambda_r^i v_r(x),  
\end{equation}
where it is used that $(y_r')_j\not=0$  only if $j\in \mmU_i\cup \mmU_0$.
Since $\zeta$ does not have an edge from $\mmU_i$ to $\mmU_0$ for any $i\in D$ by definition, all labeled edges in $\zeta\setminus \alpha(\zeta)$ also belong  to $\mmGu^D$. We have further just shown that for $e\in \alpha(\zeta)$ such that   \eqref{eq:zetaprime} is nonzero, there is an edge in  $\mmGu^D$ with label \eqref{eq:zetaprime}. 
This implies that if \eqref{eq:zetaprime2} is different from zero, then the spanning forest $\zeta$ is naturally identified with a spanning forest in $\mmGu^D$ in $\Theta_{\mmG}(F,\widetilde{B})$. Hence, using \eqref{eq:upsiliondecomp}, the equality \eqref{eq:upsilonD} holds 
and the proof is  completed.
\end{proof}

We can now prove Theorem \ref{thm:elim}(i)-(ii)   using \eqref{eq:solution2} and the previous lemma.
Using the definition of $A$ and $b$, we have $a_{j_iB_{(i)} }=\omega^i_{B_{(i)}}$, and $-b_{j_k}=T_k$ for $k=1,\dots,d$. 
By \eqref{eq:sepcompA}, the denominator of \eqref{eq:solution2} is 
\begin{multline} {\sum\limits_{B\in \V}\left(\prod\limits_{i=1}^d a_{j_i B_{(i)}}\right)\Upsilon_{\mmGu}(F,B)}
= \Upsilon_{\mmG_0}(*) \sum\limits_{B\in \V}\left(\prod\limits_{i=1}^d \omega^i_{B_{(i)}}  \Upsilon^*_{\mmG_{\mmU_i}}(j_i, B_{(i)}) \right)  \\
=  \Upsilon_{\mmG_0}(*) \prod_{i=1}^d \sum_{k\in \mmU_i} \omega^i_{k}  \Upsilon^*_{\mmG_{\mmU_i}}(j_i, k)  =  \Upsilon_{\mmG_0}(*) D(1) \cdot \dots \cdot D(d). \label{eq:detA}
\end{multline}
Since the denominator of \eqref{eq:solution2} is equal to $(-1)^{m-d}\det(A)$, we obtain Theorem \ref{thm:elim}(i).

Consider  now the numerator of \eqref{eq:solution2} for a fixed $\ell$, and assume $\ell \in \mmU_k$, $k>0$.
We easily see that  $\Theta_{\mmGu}(F,B\cup \{\ell\})=\emptyset$ if  $B\in\V^i$ with 
$i\in \{1,\dots,d+1\}\setminus\{k\}$, by using Lemma~\ref{lemma:facts}(iii)  and that $B$ has two elements  in $\mmU_k$ while $F$ only one. 
Hence  $\Upsilon_{\mmGu}(F,B\cup \{\ell\})=0$ for $B\in \V^i$ and $i\in \{1,\dots,d+1\}\setminus\{k\}$.
Now, if $B\in \V^k$, $B\cup \{\ell\}$ belongs to $\V$ with $B_{(k)}=\ell$, and we use \eqref{eq:sepcompA}  to rewrite $\Upsilon_{\mmGu}(F,B\cup \{\ell\})$. 
The numerator of \eqref{eq:solution2} becomes
\begin{multline*} 
  \sum\limits_{B\in \V^{k}}\left(\prod\limits_{ \substack{ i=1 \\ i\neq k} }^d \omega^i_{B_{(i)}}  \right) \Upsilon_{\mmGu}(F,B\cup \{\ell\})    
 = \Upsilon_{\mmG_0}(*) \Upsilon^*_{\mmG_{\mmU_k}}(j_k,\ell)  \sum\limits_{B\in \V^{k}} \prod\limits_{  \substack{ i=1 \\ i\neq k} }^d \omega^i_{B_{(i)}} \Upsilon^*_{\mmG_{\mmU_i}}(j_i, B_{(i)})    \\
 = \Upsilon_{\mmG_0}(*) \Upsilon^*_{\mmG_{\mmU_k}}(j_k,\ell)   \prod\limits_{ \substack{ i=1 \\ i\neq k} }^d \sum\limits_{j\in \mmU_i }    \omega^i_{j} \Upsilon^*_{\mmG_{\mmU_i}}(j_i, j) =  \Upsilon_{\mmG_0}(*) \Upsilon^*_{\mmG_{\mmU_k}}(j_k,\ell)  \prod\limits_{  \substack{ i=1 \\ i\neq k}}^d  D(i).  
  \end{multline*}
Combining this with \eqref{eq:detA}, we obtain that, for $\ell \in \mmU_k$ with $k>0$, 
\[ u_\ell = \frac{T_k \Upsilon^*_{\mmG_{\mmU_k}}(j_k,\ell)  }{D(k)},\]
as desired. Finally, consider  the case $\ell>j_d$. 
Using \eqref{eq:sepcompB}-\eqref{eq:sepcompC},  the term
\[ \sum\limits_{B\in \V^{k},\ell\notin B}\left(\prod\limits_{ i=1, i\neq k}^d \omega^i_{B_{(i)}}\right) \Upsilon_{\mmGu}(F,B\cup \{\ell\})\]
 agrees with
\[\begin{cases}
\Upsilon^*_{\mmG_{0,k}}(j_k,\ell) 
  \sum\limits_{B\in \V^{k}}\left(\prod\limits_{\substack{i=1 \\  i\neq k } }^d \omega^i_{B_{(i)}} \Upsilon^*_{\mmG_{\mmU_i}}(j_i, B_{(i)}) \right) 
  =   \Upsilon^*_{\mmG_{0,k}}(j_k,\ell)    \prod\limits_{ \substack{ i=1 \\ i\neq k}}^d  D(i), & k\neq d+1, \\
   \Upsilon_{\mmG_0}(\ell) 
  \sum\limits_{B\in \V^{k}}\left(\prod\limits_{ \substack{ i=1 \\ i\neq k}}^d \omega^i_{B_{(i)}} \Upsilon^*_{\mmG_{\mmU_i}}(j_i, B_{(i)}) \right) 
  =   \Upsilon_{\mmG_0}(\ell)   \prod\limits_{ \substack{ i=1}}^d  D(i), & k=d+1.
  \end{cases}\]
  If $k\in C_\mmU$, then there is not an edge from  $\mmU_k$ to $\mmU_0$, and hence   $\Upsilon^*_{\mmG_{0,k}}(j_k,\ell) =0$, since a rooted tree cannot contain both $j_k$ and $\ell$. 
  Using \eqref{eq:detA},  \eqref{eq:solution2} becomes, for $\ell>j_d$,
 \[ u_\ell = \frac{ \Upsilon_{\mmG_0}(\ell) }{\Upsilon_{\mmG_0}(*)} +   {\mathlarger \sum\limits_{k=1, i\notin C_\mmU}^{d}}\frac{T_{k}
\Upsilon^*_{\mmG_{0,k}}(j_k,\ell )}{\Upsilon_{\mmG_0}(*)D(k)}. \]
  This concludes the proof of Theorem \ref{thm:elim}(ii). 
 \begin{flushright}
$\square$
\end{flushright}

\medskip
\paragraph{Proof of Theorem \ref{thm:elim}(iii)} 
We proceed to prove the last statement of the theorem, using the ideas in the proof of the first part and again using 
\cite{Saez:PosSol}. According to \eqref{eq:solution2}, the solution is nonnegative if all terms  $\Upsilon_{\mmG_\mmU}(F,B\cup \{\ell\})$ and $\Upsilon_{\mmG_\mmU}(F,B)$ in the expression are nonnegative, since the $j_i$-th row of $A$ ($\omega^i$) and the entries of $b$ are nonnegative. In  \cite{Saez:PosSol}, conditions on   $\mmG_\mmU$ are given that guarantee this.
Therefore, the strategy is to show that the condition in Theorem \ref{thm:elim}(iii) implies the conditions for nonnegativity in \cite{Saez:PosSol}. 

The multidigraph $\mmG_\mmU$ fulfills (i) the set of edges is a disjoint union of the set of positive and negative edges, $\mmE_\mmU= \mmEu^+ \sqcup \mmEu^-$; (ii) all cycles in $\mmGu$ contain at most one edge in $\mmEu^-$, since negative edges always have target $*$; and  (iii) any path in $\mmG_\mmU$ that contains a negative edge, contains $*$. 

For $\ell\in \mmU$, define
\[ \mathcal{V}_\ell=\{j\in\mmU\st j\ \text{does not ultimately produce}\ \ell\ \text{via}\ \mmU\},\]
and consider the map $\mu\colon\mmEu^- \rightarrow \mathcal{P}(\mmEu^+)$   defined by 
\begin{equation*} 
 \mu\Big(i\ce{->[-\lambda_rv_r(x)]}*\Big)= \Big\{ i\ce{->[(y'_r)_jv_r(x)]}j\in\mmEu^+ \st  j\in \mathcal{V}_i \Big\}.
\end{equation*}
This map fulfills 
\begin{enumerate}[(a)]
\item\label{cond3aPgraph} if $e'\in\mu(e)$, then $s(e)=s(e')$,
\item\label{cond3bPgraph}  if $e'\in\mu(e)$, then every cycle containing $e'$ contains $t(e)$ (since there is not a path from $t(e')$ to $s(e)$ that does not go through $*$, by definition of $\mu$). 
\item\label{cond3cPgraph} if $e\neq e'$ then $\mu(e)\cap\mu(e')=\emptyset$. 
\end{enumerate}

 In the terminology of \cite{Saez:PosSol}, (i), (ii) and (a)-(c) imply that the pair $(\mmGu,\mu)$ is an \emph{edge partition}.  If  
 \medskip
 \begin{enumerate}[(iv)]
\item \label{cond4Pgraph} $\pi(e)+\sum\limits_{e'\in\mu(e)}\pi(e')\in\Rnn$  for all $e\in\mmEu^-$
\end{enumerate} 
also holds, then $\mmGu$ is called \emph{P-graph} with associated map $\mu$. In that case, since $\mmGu$ fulfills (A1) and (A2) (is $A$-compatible, as argued in the previous proof) and fulfills condition (iii), the assumptions of \cite[Theorem 5]{Saez:PosSol} are fulfilled, guaranteeing nonnegativity of the solution.  It further transpires from the last equation of the proof of \cite[Theorem 5]{Saez:PosSol} that  (\hyperref[cond4Pgraph]{iv}) in fact implies   $\Upsilon_{\mmG_\mmU}(F,\widetilde{B}) \in \Rnn$ for all $\widetilde{B}$ in expression  \eqref{eq:solution2}. Therefore, the goal is to show that the condition of Theorem \ref{thm:elim}(iii) implies (\hyperref[cond4Pgraph]{iv}), in which case we are done.

Let $e\in \mmEu^-$ with $s(e)=\ell$, and let $r$ be the associated reaction. 
We have
\begin{equation}\label{eq:almostPgraph}
\pi(e)+\sum\limits_{e'\in\mu(e)}\pi(e') = \left(-\lambda_r +\sum_{j\in \mathcal{V}_\ell}(y'_r)_{j}\right) v_r(x)=\left(1-\sum_{j\in \mmU\setminus\mathcal{V}_\ell}(y'_r)_{j}\right) v_r(x).
\end{equation}
By \eqref{eq:almostPgraph},  condition (\hyperref[cond4Pgraph]{iv}) holds for $e$ if  $\sum_{j\in \mmU\setminus\mathcal{V}_\ell}(y'_r)_{j}\leq 1$, or, equivalently, if

\medskip
 \begin{itemize}
\item[($\triangle$)] \label{condedges} there is at most one species $j\in \mmU$ in the product of $r$ that ultimately produces $s(e)$, and further, if such a species exists, $(y_r')_j=1$.
\end{itemize} 

\medskip
The reactions in $\Lambda_{\mmU_0}$ correspond to the edges $e\in \mmEu^-$  with  $s(e)\in \mmU_0$. Thus, the assumption in Theorem \ref{thm:elim}(iii) is precisely that (\hyperref[condedges]{$\triangle$}) holds for all $e\in \mmEu^-$ such that $s(e)\in \mmU_0$.
Hence, we  need  to show  that (\hyperref[condedges]{$\triangle$}) always holds for all $e\in \mmEu^-$ with $s(e)\in \mmU\setminus \mmU_0$. 
For this, let $e\in \mmEu^-$ with $s(e)=\ell \in \mmU_i$, $i>0$, and let $r$ be the associated reaction.   
By Lemma \ref{lemma:facts}(i), any path that ends in the node $\ell$ has all nodes in $\mmU_i$.
So, to show that (\hyperref[condedges]{$\triangle$}) holds we only need to focus on the species $j$ in the product of $r$ that belong to $\mmU_i$. 

Since $\omega^i$ is nonnegative with support $\mmU_i$, we have
\[
0=\omega^i\cdot(y'_r-y_r)=\omega^i_\ell((y_r')_\ell-1)+\sum_{k\in \mmU_i\setminus\{\ell\}}(y_r')_k\,\omega^i_k.
\]
Thus, either 
\[ (y_r')_\ell=1,\quad \text{and}\quad \sum\limits_{k\in \mmU_i\setminus\{\ell\}}(y_r')_k\,\omega^i_k=0,\]
 or 
\[ (y_r')_\ell=0,\quad \text{and}\quad\sum\limits_{k\in \mmU_i\setminus\{\ell\}}(y_r')_k\,\omega^i_k=\omega^i_\ell.\]
In the first case, $(y_r')_k=0$ for all $k\in \mmU_i\setminus\{\ell\}$ since $\omega^i_k> 0$, and (\hyperref[condedges]{$\triangle$}) holds since $(y_r')_\ell=1$.
In the second case, $\ell$ is not in the product of $r$. Assume there exist $\ell_1,\dots,\ell_j\in \mmU_i$, $j\geq 1$, in the
 product of $r$ such that $\ell_q$, $q=1,\ldots,j$, ultimately produces $\ell$ via $\mmU$ (hence via $\mmU_i$) for all $q=1,\dots,j$.
For each $q$, the sequence of reactions associated with the (simple) path from $\ell_q$ to $\ell$ fulfills that exactly one of the reactions involves $\ell_q$ in the reactant, none of the reactants involve $\ell$ and at least one product involves $\ell$,  and any other species in $\mmU_i$ that is in the reactant of one of the reactions (thus with stoichiometric coefficient one), is also in the product of some reaction. 
This implies that  the sum of the reaction vectors of the  sequence of reactions defines a vector $z^q\in S\cap \mathbb{Z}^n$ such that 
$z^q_{\ell_q}\geq -1$, $z^q_\ell\geq 1$, and $z^q_k\geq 0$ for $k\in \mmU_i\setminus\{\ell,\ell_q\}$. 
Now, using that $(y_r'-y_r) \cdot \omega^i =0$ and $z^q\cdot \omega^i =0$ for  $q=1,\dots,j$, it holds  that
\begin{align*}
0=(y_r'-y_r) \cdot \omega^i  =& -\omega^i_\ell+ \sum_{q=1}^j (y_r')_{\ell_q}\,\omega^i_{\ell_q}  +\sum_{k\in \mmU_i\setminus\{\ell,\ell_1,\dots,\ell_q\}}(y_r')_k\,\omega^i_k, \\
0=z^q \cdot \omega^i=&\,\,z^q_\ell \omega^i_\ell+z_{\ell_q}^q\,\omega^i_{\ell_q}+\sum_{
k\in \mmU_i\setminus\{\ell,\ell_q\}}z^q_k\,\omega^i_k, \qquad q=1,\dots,j.
\end{align*}
Since $(y_r')_{\ell_q}> 0$, the sum of the right-hand sides of these $j+1$ equalities is necessarily strictly positive (hence, the system is incompatible), unless $j=1$ and $(y_r')_{\ell_1}=1$. 
This shows that (\hyperref[condedges]{$\triangle$}) holds for the given $e$, which concludes the proof of Theorem \ref{thm:elim}(iii).
  \begin{flushright}
$\square$
\end{flushright}

\medskip
 
\paragraph{Proof of Theorem \ref{thm:elim2}}
(i) Let $\ell \in \mmU'$ and assume for simplicity that $\ell \in \mmH= \mmU_1\cap \mmU_2\subsetneq \mmU_1$. 
We show first that 
$S_{\mmH }^\perp=\{0\}$.
Indeed, if this is not the case, then there is a  nonzero vector $z\in S_{\mmH}^\perp\subset S_{\mmU}^\perp$. 
For $\lambda=\min_{i\in \mmU_1}(z_i/\omega^1_i)$, the vector $\omega'=-\lambda  \omega^1+ z$ is nonnegative and $\supp(\omega')\subsetneq \mmU_1$. Since  $\{\omega',\omega^2,\dots,\omega^d\}\subset \R^n_{\geq 0}$ is a basis of $S_\mmU^\perp$, we reach a contradiction because $\{\omega^1,\dots,\omega^d\}$ is minimal by assumption.

Consider the system \eqref{eq:linelimsyst}  for $\mmH$. 
Since $S_{\mmH}^\perp=\{0\}$,  Theorem \ref{thm:elim} applies. Lemma \ref{lemma:cons_laws} applied to $\omega^1$ and $\omega^2$, respectively, implies that for any reaction with  a species in $\mmH$ in the product, there must be one species in $\mmU_1$ and one in $\mmU_2$, respectively, in the reactant. Since $\mmU$ is reactant-noninteracting,  this species must be in $\mmH=\mmU_1\cap \mmU_2$. Therefore,   an edge of $\mmG_{\mmH}$ with target  in $\mmH$ cannot have source $*$, and in particular,  there is not a spanning forest of $\mmG_{\mmH}$ with root $\ell\in\mmH$.
 By Theorem \ref{thm:elim}(ii), we conclude that 
 $u_\ell=0$. Since the solution to system \eqref{eq:linelimsyst} for $\mmH$ agrees with  the solution to the original system for $\mmU$, we have shown (i).

(ii) We choose $j_1,\dots,j_d$ not in $\mmU'$ (this is always possible).  According to (i), the variables corresponding to the species in $\mmU'$ can be equated to zero in \eqref{eq:linelimsyst} and the resulting reduced system has full rank.
Consider  the reaction network $\mmN'$ on $\mmS\setminus \mmU'$ constructed from the original network by removing the reactions with species in $\mmU'$ in the reactant.
This automatically removes all reactions with a product involving  species in $\mmU'$ by Lemma \ref{lemma:cons_laws}, with argument as in (i).
Let $S'$ be the stoichiometric subspace of $\mmN'$.  System \eqref{eq:linelimsyst} 
associated with $\mmN'$ and the induced kinetics 
agrees with system \eqref{eq:linelimsyst} 
for the original network, after letting $u_\ell=0$ for all $\ell \in \mmU'$. Indeed, 
 the kinetics is $\mmU$-linear and thus the rate of the reactions with species in $\mmU'$ in the reactant equate to zero. 
Thus, outside the rows corresponding to $\omega^1,\dots,\omega^d$, the reduced system agrees with the elimination system for $\mmN'$ and $\mmU\setminus \mmU'$.  

Let  $p$ be the projection from $\R^n$ onto the components of the species in $\mmS\setminus \mmU'$.  The vectors  $p(\omega^1),\dots,p(\omega^d)$ are  linearly independent because they have disjoint support and further each $p(\omega^i)$ is   orthogonal to the vectors of the reactions in $\mmN'$. This defines $d$ independent vectors in $(S')_{\mmU\setminus \mmU'}^\perp$. Since the reduced system has full rank, $(S')_{\mmU\setminus \mmU'}^\perp$ has dimension $d$.
This demonstrates that the reduced system is the elimination system of $\mmN'$ associated with $\mmU\setminus \mmU'$.
The associated multidigraph $\mmG'_{\mmU\setminus \mmU'}$ agrees with the submultidigraph  of $\mmGu$
induced by $\mmU\setminus \mmU'$, because, in particular, any edge from  $\mmU\setminus \mmU'$ to $*$ in $\mmGu$ arises from a reaction not involving $\mmU'$.  Hence (ii) is proven. 

(iii) If the condition in Theorem \ref{thm:elim}(iii) holds for $\mmN$ and $\mmU$, then it also holds for $\mmN'$ and $\mmU\setminus \mmU'$ 
since $\mmU'\cap \mmU_0=\emptyset$.  This implies nonnegativity of $u_\ell$ for all $\ell \in \mmU\setminus \mmU'$, and combined with $u_\ell=0$ for $\ell\in \mmU'$, we obtain  (iii).

\small


\end{document}